 \title{Spatial Throughput Maximization of Wireless Powered Communication Networks}
 \author{Yue Ling Che,  \emph{Member, IEEE},  Lingjie Duan, \emph{Member, IEEE},  and Rui Zhang, \emph{Member, IEEE}
 \thanks{This work was supported by the SUTD-MIT International Design Centre (IDC) Grant (Project Number: IDSF1200106OH) and the SUTD-ZJU Joint Collaboration Grant (Project Number: ZJUR 041407). This work was also supported in part by the National University of Singapore under Research Grant R-263-000-679-133. Lingjie Duan is the corresponding author.}
 \thanks{Y.~L.~Che and L.~Duan are with the Engineering Systems and Design
 Pillar, Singapore University of Technology and Design (e-mail:
 yueling\_che@sutd.edu.sg;  lingjie\_duan@sutd.edu.sg).}
 \thanks{R.~Zhang is with the Department of Electrical and Computer Engineering,
 National University of Singapore (e-mail: elezhang@nus.edu.sg). He is also with
 the Institute for Infocomm Research, A*STAR, Singapore.}
 %\thanks{}
 }
\begin{document}
\maketitle
\thispagestyle{empty}

\begin{abstract}

Wireless charging is a promising way to power wireless nodes'   transmissions.
This paper considers new dual-function access points (APs)
which are able to support the energy/information transmission to/from wireless
nodes. We focus on a large-scale wireless powered communication network (WPCN),
and use stochastic geometry to analyze the wireless nodes' performance
tradeoff between energy harvesting and information transmission.
 We   study two cases with
battery-free and battery-deployed wireless nodes.  For both cases, we   consider
a harvest-then-transmit protocol by partitioning each time frame into a downlink
(DL) phase for energy transfer, and an uplink (UL) phase for information
transfer.   By jointly optimizing frame partition between the two phases and the
wireless nodes' transmit power, we maximize the wireless nodes' spatial
throughput subject to a successful information transmission probability constraint.
For the battery-free case, we show that the wireless nodes prefer to choose small
transmit power  to obtain large transmission  opportunity. For the battery-deployed case, we first study an
ideal infinite-capacity battery scenario for wireless nodes, and show that the optimal charging design is not unique,
due to the sufficient energy stored in the battery. We then
extend to the practical finite-capacity battery scenario.  Although the exact
performance is difficult to be obtained analytically, it is  shown to be
upper and lower bounded by those  in the
infinite-capacity battery scenario and the battery-free case, respectively.
Finally, we provide numerical
results to corroborate our study.

\end{abstract}

\begin{keywords}
Wireless powered communication networks (WPCN), harvest-then-transmit protocol, radio-frequency (RF) energy harvesting,
 stochastic geometry, spatial throughput maximization,  battery storage.
\end{keywords}

\newtheorem{definition}{\underline{Definition}}[section]
\newtheorem{fact}{Fact}
\newtheorem{assumption}{Assumption}
\newtheorem{theorem}{\underline{Theorem}}[section]
\newtheorem{lemma}{\underline{Lemma}}[section]
\newtheorem{corollary}{\underline{Corollary}}[section]
\newtheorem{proposition}{\underline{Proposition}}[section]
\newtheorem{example}{\underline{Example}}[section]
\newtheorem{remark}{\underline{Remark}}[section]
\newcommand{\mv}[1]{\mbox{\boldmath{$ #1 $}}}
\newtheorem{property}{\underline{Property}}[section]

\section{Introduction}
By enabling the wireless devices to scavenge energy from the environment, energy harvesting has become a
promising solution to provide perpetual lifetime for energy-constrained wireless networks (e.g., the wireless sensor networks)  \cite{microwave}.
In particular, with the ability to cater to the mobility of the wireless nodes,
the ambient radio-frequency (RF) signals have been considered as a vital and
widely available  energy resource to power  wireless communication networks  \cite{Visser.IEEEProc.2013}.
 In recent point-to-point energy transfer experiments \cite{EH_overview}, wireless power of  3.5mW and 1uW have been harvested from the
 RF signals at distances of 0.6 and 11 meters, respectively.  Moreover, in the experiment-based study in \cite{He.TMC.13},
 the harvested energy from multiple energy transmitting sources is shown to be additive, which can be exploited to extend the operation range of wireless charging.
Due to  the appealing features of the RF-based energy harvesting,  the
{\it wireless powered communication network} (WPCN) \cite{Ju.TWC},
in which the wireless nodes exploit the harvested RF energy to power their
information transmissions, has  attracted
growing  attentions.

Different from traditional wireless networks, where the wireless nodes can draw
energy from reliable power supplies (e.g., by connecting to the power grid or a
battery),
due to the wireless fading channels, the random movement  of the wireless nodes,
as well as the employed energy harvesting techniques, the amount of
energy that can be harvested in a WPCN   is
generally uncertain. As a result, to meet   the quality-of-service (QoS)
requirement of the information transmission, the designed transmission schemes must be  adaptive  to the
dynamics of the harvested RF energy. Although challenging,  by assuming  completely or partially known knowledge of the
energy arrival processes, effective transmission schemes have been proposed in, e.g.,  \cite{Ozel.JSAC.2011}-\cite{Xu.JSAC.2014}.
 However,   the adopted energy arrival models in the above studies do not apply to the RF-based energy
harvesting scenario.

\begin{figure*}
\setlength{\abovecaptionskip}{-0.04in}
\centering
\DeclareGraphicsExtensions{.eps,.mps,.pdf,.jpg,.png}
\DeclareGraphicsRule{*}{eps}{*}{}
\includegraphics[angle=0, width=0.8\textwidth]{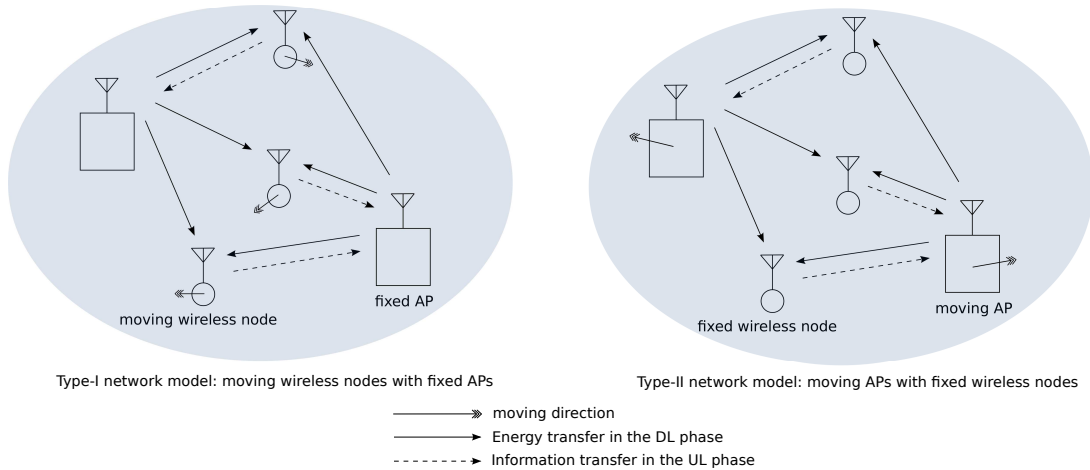}
\caption{Two types of WPCN models with DL energy
harvesting  and UL information transmission.}
\label{fig: network_model}
\vspace{-0.18in}
\end{figure*}

There has been a growing research interest focusing on  a point-to-point
or point-to-multipoint RF energy harvesting system, where a single transmitter transmits energy to a
single wireless node or multiple wireless nodes, respectively (e.g., in  \cite{Ju.TWC}, \cite{Rui.TWC.13}, and \cite{Liu.TWC.13}).
In particular,  in  \cite{Ju.TWC} the authors studied a
point-to-multipoint system, where  the energy transfer from an access
point (AP) to multiple wireless nodes is separated from  the information transfer
from each of the wireless nodes to the AP  in time domain.   By
exploiting the harvested energy at each wireless node, \cite{Ju.TWC}
investigated the optimal time allocation for  energy transfer and
information transfer, so as to maximize the system throughput with fairness consideration.
Moreover, since the RF signals may also carry information besides energy, simultaneous
wireless information and power transfer (SWIPT)
has been  studied in the literature  (see e.g.  \cite{Rui.TWC.13},
\cite{Liu.TWC.13}), where more sophisticated receiver design is involved.
In addition, we also noticed there are some works focusing on energy-efficient design for other applications
 (e.g., \cite{Wen.datacenter1}-\cite{Wen.EE}).

However,  most of the existing work, including the above
mentioned ones, did not consider optimal transmission scheme design in a
\emph{large-scale} WPCN with a very large number of wireless nodes, mainly due to the
following reasons: 1) it is difficult to design a  scalable
transmission scheme that can be efficiently implemented with the increasing number of wireless nodes;
and 2) due to the wireless fading channels as well as
the random placement of both energy transmitters and  wireless nodes,
it is challenging to analytically characterize the harvested RF energy by
a wireless node from multiple energy transmitters.
It came to our attention that stochastic geometry, as a novel way to analyze
 large-scale  communication networks, provides a set of powerful
mathematical tools for modeling and designing the wireless networks \cite{Stoyan.SG.95}.
Moreover,  the mathematical tools (e.g., probability generating functional
(PGFL) of a Poisson point process (PPP)), which facilitate the interference analysis  in a
wireless communication network \cite{Haenggi.book}, can also help
characterize the harvested RF energy in a WPCN~\cite{Huang.TWC.14}, \cite{Lee.TWC.2013}.

In this paper, by using tools from stochastic geometry, we aim
at optimizing  bidirectional energy harvesting and information transmission in a
large-scale WPCN.
We consider a new type of dual-function  APs which are able to coordinate energy/information transfer to/from the wireless nodes.
We also consider two types of networks models. As illustrated in  Fig.~\ref{fig: network_model}, in Type-I network model,
the wireless nodes (e.g., the portable electronic devices or the  unmanned vehicles \cite{Type_I}) are assumed to independently
move in the system over frames, while the locations of the APs are fixed.
In Type-II network model,  however, the APs (e.g., the wireless charging vehicles \cite{Type_II}) are assumed to independently
move in the system over frames, while the locations of the wireless nodes are fixed.
 We show that the wireless node's downlink (DL) energy harvesting performance and the uplink (UL) information transfer performance can be
identically characterized for both types of network models.
Moreover, depending on whether each wireless node deploys a rechargeable
battery, we consider two cases with battery-free and battery-deployed wireless
nodes, respectively, and  study the
effects of battery storage.
For both cases, we maximize  the {\it spatial throughput} of the wireless nodes, which is defined as
the total throughput that is achieved  by the wireless nodes per unit network
area averaged over all  information transmission phases (bps/Hz/unit-area) \cite{Che.TWC.14}.

The key contributions of this paper are summarized as follows.
\begin{itemize}
\item \emph{Novel harvest-then-transmit protocol to power a  large-scale network:}
In Section II, we propose a new harvest-then-transmit protocol by extending that in \cite{Ju.TWC},  where
each time frame is partitioned into  a DL phase  for energy transfer from  the APs
 to the wireless nodes, and an UL phase for information transfer
from  each wireless node  to its associated AP. We show that the proposed
harvest-then-transmit protocol is scalable and thus can be efficiently
implemented in a large-scale network.

\item\emph{Problem formulation and simplification for spatial throughput maximization:}
In Section III,  by jointly optimizing
time frame partition between the DL and UL phases and the wireless nodes' transmit
power, we    formulate the spatial throughput
maximization problem under a successful information transmission probability
constraint. To make the problem analytically tractable, we simplify the  problem
  by utilizing the equivalence of the successful information transmission probability
constraint to a transmission probability constraint plus a minimum transmit
power constraint.

\item \emph{Spatial throughput maximization for battery-free wireless nodes:}
In Section IV, we  solve
the spatial throughput maximization problem in the battery-free case, by
studying the
effects of the AP density and the wireless node density. We also show that
at the optimality the wireless nodes  generally prefer to select a small
transmit power, for obtaining large
transmission opportunity.
%Moreover, it is also shown that
%increasing AP density is beneficial for both energy harvesting and information
%transmission. Given the wireless node density, the numbers of optimal
 %time frame partition and UL transmit power solutions are both non-decreasing over
%the   AP density.

\item \emph{Spatial throughput maximization for battery-deployed wireless nodes:}
In Section V, we first study an  ideal infinite-capacity battery scenario, and
show that  all the feasible time frame partition
and  UL transmit power  are optimal, since energy stored in the
battery is sufficient over time.
We then extend our study to the practical finite-capacity battery
scenario. By proposing  a new tight lower bound for the transmission
probability, we approximately solve  the spatial throughput maximization problem.

\end{itemize}

We note only limited studies in \cite{Huang.TWC.14}, \cite{Lee.TWC.2013},  \cite{Huang.IT.13}, and \cite{Dhillon.TWC.14} have adopted
stochastic geometry to  study the large-scale   communication networks enabled by energy harvesting.
Different from these existing studies, we consider the WPCN where dual functional APs  transmit energy and receive information
to/from wireless nodes. Moreover, we focus on characterizing optimal tradeoffs between the DL energy transfer and the UL information transfer,
for both battery-free and battery-deployed cases,
and theoretically analyze the impact of battery storage on
the network throughput performance. In addition, different from most existing studies based on stochastic geometry that only
focused on average system performance of one snapshot, in this paper,  we pursue a long-term average system analysis,
and successfully obtain  tractable system performance in both DL and UL.

\section{System Model}

We consider a WPCN with
stochastically deployed APs and wireless nodes,  where each wireless node harvests
energy broadcast by the APs,  and then uses the harvested energy to support
its information transmission to the associated AP.
As shown in Fig.~1, we assume either   the wireless nodes or the APs move in the system.
  In this section, we  first present  the detailed operations at each wireless node  for
both battery-free and battery-deployed cases, and then
  develop the  network model based on stochastic geometry.

\subsection{System Operation Model}

We consider that  each AP  has
reliable power supply (e.g., by connecting to the power grid or equipping with large-capacity battery storage
in Type-I or Type-II network model, respectively),  while each wireless node  is not
equipped with any embedded energy sources  but  an RF energy harvesting
device.
Thus, the wireless nodes are able to harvest the energy broadcast by the APs,
and use them to support their information transmissions to the APs.
Similar to the practical radio frequency identification (RFID)   system that coexists with
the reader network over the same   frequency (around 915MHz)  \cite{Sample.RFID.07},
we assume all the APs and wireless nodes operate over the
same frequency band.
We  also assume all the APs and wireless nodes are each equipped with a single
antenna, as in the case of the wireless sensor networks.
We  partition  energy transfer and information transfer in
time domain,\footnote{The time-partition-based model
can also be extended to a frequency-partition-based model,
for the wireless devices with multiple antennas and the ability to
operate   over different frequency bands simultaneously as
in \cite{Rui.TWC.13}. Specifically, for a system with
total $T$ frequency bands (like $T$ time slots in this paper), we can assign $N$
bands for energy harvesting and the
remaining $T-N$ bands for information transmission. To optimally decide $N$ and
the UL
transmit power $P_U$, there exists similar tradeoff as  in the
time-partition-based model studied here. } as shown in Fig.~2.
We  assume the network is  frame-based in time and consider a
harvest-then-transmit protocol for the wireless nodes.
Specifically, we assume  each frame  consists of $T>1$
slots, indexing from $0$ to $T-1$, and all the slots are synchronized among APs and wireless nodes.
In each frame, we assign slot $0$ to slot $N-1$, $1\leq
N\leq T-1$, to the APs for broadcasting energy in the DL phase, and assign the
remaining slots, i.e., slot $N$ to slot $T-1$, to the wireless nodes for
transmitting information in the UL phase.
We denote the transmit power of the
APs and the wireless nodes as $P_D>0$ and $P_U>0$, respectively. We assume
$0<P_U\leq P_{\max}$, where $P_{\max}$ is the maximum allowable transmit power
of each wireless node. It is worth noting that to design a \emph{scalable} transmission
scheme for a large-scale WPCN (e.g., wireless sensor or RFID networks),
where the wireless nodes  usually operate at low transmit power,
we consider   the same $P_U$ and $N$ for each wireless node,
and optimize $P_U$ and $N$ globally for a homogeneous stochastic network as will be
shown later. %\footnote{
%In the future work, we are also interested in designing different  $N$ and $P_U$ for
%each wireless mode, though the system-level analysis on the tradeoff
%between energy and information transfer may become  very challenging, due to
%the resultant  non-homogeneous stochastic network in general. }.
Thus, wireless nodes %with time-varying harvested energy
do not need to communicate and coordinate in
 interference management, which is easy to implement in practice.
Moreover, due to the wireless fading channels as well as the low energy
harvesting efficiency of today's RFID technology \cite{EH_system},
 the amount of energy that can be collected in
one slot is usually small, and is difficult to be effectively exploited by the wireless nodes.
As a result, as in the practical energy harvesting devices, e.g., the
  P2110 power harvester receiver \cite{Powercast} designed by the Powercast corporation,
  we consider that a  small-sized capacitor is integrated in the circuits of the
energy harvesting device,\footnote{The integrated capacitor in
the energy harvesting device
  is only used  to improve the energy harvesting efficiency, and thus will not
be exploited   as an energy storage device as the rechargeable battery, which
can manage the harvested energy.}  based on which, the harvested energy from
 slot $0$ to slot $N-1$ in the DL phase can be accumulated without the usage of
additional battery, and
 then entirely boosted out for exploitation by each wireless node (for UL transmission or battery charging), as shown in Fig.~2.
For each wireless node $i$, denote $Z_{F,i}(t)$ as the amount of energy harvested in
DL slot $t$ of frame $F$,
$0\leq t \leq N-1$, $1\leq F\leq \infty$,  and $Z_{F,i}$ as
the total amount of energy harvested in the DL phase of frame $F$.
We  have $Z_{F,i}=\sum_{t=0}^{N-1}Z_{F,i}(t)$.

\begin{figure}
\setlength{\abovecaptionskip}{-0.05in}
\centering
\DeclareGraphicsExtensions{.eps,.mps,.pdf,.jpg,.png}
\DeclareGraphicsRule{*}{eps}{*}{}
\includegraphics[angle=0, width=0.48\textwidth]{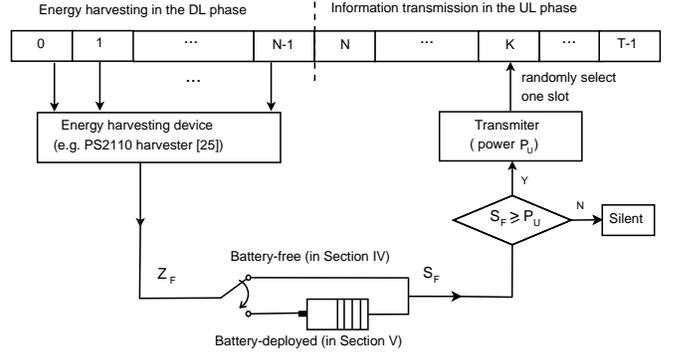}
\caption{Energy harvesting and information transmission for each wireless node in
each frame.}
\label{fig: system_model}
\vspace{-0.12in}
\end{figure}

We  denote $S_{F,i}$ as the amount of energy that is available to wireless
node $i$ at the beginning of  the UL phase of frame $F$.
In the following, depending on whether a wireless node is equipped with a
rechargeable battery
(or any other energy storage devices) to store the total harvested DL energy
$Z_{F,i}$ in each frame
$F$, we consider two cases  with battery-free and battery-deployed wireless nodes, respectively.
In each case,  by applying a $P_U$-threshold
based UL transmission decision as in the literature (e.g., \cite{Huang.TWC.14},  \cite{Lee.TWC.2013}, and \cite{Huang.IT.13}), we
model the evolvement of $S_{F,i}$ over $F$.
For convenience, we assume a normalized unit slot time in the sequel without
loss of generality, and thus we can use the terms of energy and power
interchangeably.

\subsubsection{Battery-free Case}
As show in Fig.~2, in each frame $F$, due to the lack of energy storage, the
wireless nodes manage the harvested energy  in a myopic manner,
i.e., all the harvested energy $Z_{F,i}$ is  consumed within the
current frame $F$. Moreover, if $Z_{F,i}\geq P_U$,
wireless node $i$ decides to transmit information with
 power $P_U$ in the UL phase; otherwise,
it stays silent in the UL phase of frame $F$. Since the unused amount of energy
in the current frame (i.e., $Z_{F,i}-P_U$, if $Z_{F,i}\geq P_U$, or
$Z_{F,i}$, otherwise) will not be kept for future use, for any $F\in\{1,...,\infty\}$, we can easily obtain
\vspace{-0.05in}
\begin{equation}
 S_{F,i}=Z_{F,i}. \label{eq:  S_F_No_Storage}
\end{equation}

\subsubsection{Battery-deployed Case}
Unlike the battery-free case, by deploying a rechargeable battery in the device
circuit,
the wireless nodes can store the unused energy in the current frame for future
use, as long
as the battery capacity  allows.
Thus, the harvested energy can be exploited more effectively in the
battery-deployed case than that in the battery-free case in general.
As shown in Fig.~2, in each frame $F$,  if the battery level at the beginning
of the UL phase,
given by $S_{F,i}$, is no smaller than the required UL transmit power
$P_U$, the wireless node decides to transmit in the UL; otherwise, it stays
silent in the UL phase.
Let the battery capacity be $C$ with $P_U\leq C\leq\infty$.
For any $F\in\{1,...,\infty\}$, given $S_{F-1,i}$, by subtracting the consumed energy in the UL phase of frame $F-1$
and adding the harvested energy in the DL phase of frame $F$, we obtain
 $S_{F,i}$ as \vspace{-0.05in}
\begin{align}
S_{F,i}=\min(S_{F-1,i}-P_UI(S_{F-1,i}\geq P_U)+Z_{F,i}, C), \label{eq:
S_F_Storage}
\end{align}
where $S_{0,i}=0$ and the indicator function  $I(x)=1$ if $x$ is true,
and $I(x)=0$ otherwise. Note that $C=\infty$ is an ideal scenario
with infinite-capacity battery. It is easy to find that in this scenario, for any $F\in\{1,...,\infty\}$,
(\ref{eq: S_F_Storage}) is reduced to
\begin{align}
S_{F,i}=S_{F-1,i}-P_UI(S_{F-1,i}\geq P_U)+Z_{F,i}.  \label{eq:
S_F_Storage_Inf}
\end{align}

At last, in the UL transmission,  for both cases with battery-free and
battery-deployed wireless nodes, we assume there is no transmission coordination
between the wireless nodes for simplicity, as in \cite{Dhillon.TWC.14}.
We thus adopt independent transmission scheduling
for the wireless nodes.\footnote{For simplicity,
we only focus on independent scheduling in this paper.
More advanced scheduling schemes and their effects  in wireless communication networks,
as  in, e.g.,  \cite{Che.TWC.14} and \cite{Wen.scheduling},   will be considered in our future work.}
Specifically,   to reduce the
potentially high interference level in
the UL due to the independent transmissions
of the wireless nodes, we  assume that  if a wireless node $i$
decides to transmit,  it randomly selects a slot from slot $N$ to slot $T-1$ in
the UL phase with equal probability of $1/(T-N)$, and transmits its information
in this slot with transmit power $P_U$ to its nearest AP,  as in
\cite{Huang.TWC.14} and \cite{Andrews.COM.11}, for achieving good communication
quality. The UL information transmission is successful if the received
signal-to-interference-plus-noise-ratio (SINR) at the AP is no smaller than a
target SINR threshold, denoted by $\beta>0$.

\subsection{Network Model}
Based on the operations of the wireless nodes and the APs, in this subsection, we develop the network model
based on stochastic geometry, and then characterize the harvested energy of the wireless node in each frame.

As shown in Fig.~1, we consider two types of network models, which are   Type-I network model, with moving wireless nodes and  static APs, and Type-II network model, with moving APs and static wireless nodes. In both types of networks models, we assume the wireless nodes and the APs are initialized as two independent homogeneous PPPs, denoted by $\Phi(\lambda_w)$, of wireless node density $\lambda_w>0$, and $\Phi(\lambda_{AP})$, of AP density $\lambda_{AP}>0$, respectively. In  Type-I network model, we assume all the APs stay at their initialized locations in all frames, while the wireless nodes independently  change their locations in each frame based on the random walk model considered in \cite{Baccelli.NOW.I}. Specifically,
at the beginning of each frame, each wireless node is independently displaced from its previous location in the proceeding frame  to a new location in the current frame; and stays at its new location  within the current frame. According to the Displacement Theorem in \cite{Baccelli.NOW.I}, the homogeneous PPP $\Phi(\lambda_w)$ is preserved  by the independently displaced wireless nodes in each frame. Similarly, in   Type-II network model, we assume the wireless nodes stay at their initialized locations in all frames, while the APs are independently displaced over frames as the wireless nodes in  Type-I network model. Clearly, the homogeneous PPP  $\Phi(\lambda_{AP})$ is also preserved by the independently displaced APs in each frame in   Type-II network model.

Let $\Phi(\lambda_{AP})=\{X\}$ and $\Phi(\lambda_w)=\{Y\}$, where
$X,Y\in\mathbb{R}^2$
denote the coordinates of the APs and wireless nodes, respectively.
As in  the existing literature that studied wireless charging based on stochastic geometry
(e.g., \cite{Lee.TWC.2013}, \cite{Huang.IT.13}, and \cite{Dhillon.TWC.14}), we assume Rayleigh flat fading channels
with path-loss.\footnote{Since shadowing does not affect the main results of this paper, we ignore the effects of shadowing   for  tractable analysis.}
We also assume the Rayleigh fading channels vary independently over different time slots. In each slot $t$ of a particular frame, the radio signal transmitted by an AP/wireless node is received
at the origin with strength $|X|^{-\alpha}h_X(t)$ and $|Y|^{-\alpha}h_Y(t)$,
respectively, where  $|X|$ and $|Y|$  are the distances from AP $X$ or wireless node $Y$ to the origin $o=(0,0)$, respectively, $h_{X}(t)$ and $h_{Y}(t)$ are independent and identically distributed (i.i.d.) exponential random variables with unit  mean
to model Rayleigh fading in slot $t$ from AP $X$ or wireless node $Y$ to the origin, respectively,
and $\alpha>2$ is the path-loss exponent.

In both Type-I and Type-II network models, due to the stationarity of the homogeneous
PPP $\Phi(\lambda_{AP})$, we focus on a typical wireless node in the DL phase,
which is assumed to be located at the origin, without loss of generality.
For notational simplicity,  for the typical wireless node, we omit
the lowerscript $i$ and use $Z_F(t)$ and $Z_F$ to denote the amount of energy
that is harvested in a particular DL slot $t$ and over all DL slots of frame
$F$, respectively, and use
 $S_F$ to denote the amount of available energy for UL phase in frame $F$.
 Since the harvested energy is obtained from the received RF signals,
as in the existing studies on wireless powered energy harvesting (e.g., \cite{Ju.TWC}, \cite{Rui.TWC.13},  \cite{Huang.TWC.14}, \cite{Lee.TWC.2013} and  \cite{Huang.IT.13}), for any slot $t$ of frame $F$, $0\leq t \leq
N-1$, $1\leq F \leq \infty$, we have
\begin{equation}
 Z_F(t)= \eta \sum_{X\in \Phi(\lambda_{AP})} P_D
|X|^{-\alpha}h_{X}(t), \label{eq: Z_F_t}
\end{equation}
where $\eta \in (0,1)$ is the energy harvesting
efficiency.
As a result, by summing $Z_F(t)$  over all slots in the
DL phase of frame $F$, we obtain \vspace{-0.07in}
\begin{equation}
Z_F= \eta\sum_{X\in \Phi(\lambda_{AP})}
P_D |X|^{-\alpha}\sum_{t=1}^{N}h_{X}(t), \label{eq: Z_F_def}
\end{equation}
where  $\sum_{t=1}^{N}h_{X}(t)$ follows Erlang distribution with shape $N$ and rate $1$. By applying the  PGFL  of the PPP, we obtain the Laplace transform and the complementary cumulative distribution function (CCDF) of $Z_F$ in the following proposition.

\begin{proposition}\label{proposition: Z_F_distribution}
The Laplace transform of $Z_F$ is
\begin{equation}
 \mathcal{L}_{Z_F}\!(s)\!=\!\exp\! \!\left(\!-\pi \lambda_{AP} \frac{\Gamma\!\left(N\!+\!\frac{2}{\alpha}\right)}{\Gamma(N)} \Gamma\!\left(\!1\!-\!\frac{2}{\alpha}\!\right)(P_D \eta s)^{\frac{2}{\alpha}} \!\!\right),
\end{equation}
where $\Gamma(p)=\int_0^{\infty}t^{p-1}e^{-t}\,dt$ is the gamma function. When $\alpha=4$, for any given $z\geq 0$, the CCDF for  $Z_F$ is given  as
\begin{equation}
 \mathbb{P}(Z_F\geq z)= \textrm{erf}\left(\frac{\lambda_{AP}\Gamma\left(N+\frac{2}{\alpha}\right)}{2\Gamma(N)}\sqrt{\frac{\pi^3 P_D \eta}{z}}  \right),  \label{eq: Z_F_CCDF}%\vspace{-0.05in}
\end{equation}
where $\textrm{erf}(x)=\frac{2}{\sqrt{\pi}}\int_0^x \exp(-u^2)\, du$  is the error function.
\end{proposition}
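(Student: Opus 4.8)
The plan is to obtain $\mathcal{L}_{Z_F}(s)=\mathbb{E}[e^{-sZ_F}]$ directly from the PGFL of the PPP $\Phi(\lambda_{AP})$, and then to recover the CCDF for $\alpha=4$ by inverting the resulting transform, which will turn out to have a particularly simple form.

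First I would regard each point $X\in\Phi(\lambda_{AP})$ as carrying an i.i.d.\ mark $G_X:=\sum_{t=1}^{N}h_X(t)$, which is Erlang with shape $N$ and rate $1$, so that $Z_F=\eta P_D\sum_X |X|^{-\alpha}G_X$. Taking the expectation over the marks inside the product and using the Erlang transform $\mathbb{E}[e^{-aG}]=(1+a)^{-N}$, the marked PGFL reduces to
\begin{equation*}
\mathcal{L}_{Z_F}(s)=\exp\!\left(-\lambda_{AP}\int_{\mathbb{R}^2}\!\left(1-\bigl(1+s\eta P_D|x|^{-\alpha}\bigr)^{-N}\right)dx\right).
\end{equation*}
Passing to polar coordinates ($dx=2\pi r\,dr$) reduces this to a single integral over $r\in(0,\infty)$.

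The main obstacle is evaluating that integral in closed form. My plan is to substitute so as to factor out $(s\eta P_D)^{2/\alpha}$ and reduce the problem to $\int_0^\infty\bigl(1-(1+u)^{-N}\bigr)u^{-2/\alpha-1}\,du$. Since $2/\alpha<1$ and $1-(1+u)^{-N}\sim Nu$ as $u\to0$, integration by parts kills the boundary terms and converts this into a multiple of $\int_0^\infty u^{-2/\alpha}(1+u)^{-N-1}\,du$, which is the Beta integral $B\!\left(1-\tfrac{2}{\alpha},N+\tfrac{2}{\alpha}\right)$. Rewriting $B(a,b)=\Gamma(a)\Gamma(b)/\Gamma(a+b)$, using $\Gamma(N+1)=N\Gamma(N)$, and noting the clean cancellation $\tfrac{2\pi}{\alpha}\cdot\tfrac{\alpha}{2}=\pi$, collapses all the constants into the stated exponent, proving the first claim.

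For the CCDF I would specialize to $\alpha=4$, so that $2/\alpha=\tfrac12$ and $\Gamma(\tfrac12)=\sqrt{\pi}$; the transform then reduces to $\mathcal{L}_{Z_F}(s)=e^{-A\sqrt{s}}$ with $A=\pi^{3/2}\lambda_{AP}\frac{\Gamma(N+1/2)}{\Gamma(N)}\sqrt{P_D\eta}$. I would then invoke the standard transform pair $\mathcal{L}^{-1}\bigl[e^{-A\sqrt{s}}\bigr](x)=\frac{A}{2\sqrt{\pi}}x^{-3/2}e^{-A^2/(4x)}$ (the one-sided L\'evy / stable-$\tfrac12$ density) and integrate it over $[z,\infty)$. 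The substitution $w=A/(2\sqrt{x})$ turns the tail integral into $\frac{2}{\sqrt{\pi}}\int_0^{A/(2\sqrt{z})}e^{-w^2}\,dw=\mathrm{erf}\!\bigl(A/(2\sqrt{z})\bigr)$, and substituting $A$ rewrites the argument as $\frac{\lambda_{AP}\Gamma(N+1/2)}{2\Gamma(N)}\sqrt{\frac{\pi^3 P_D\eta}{z}}$, which is the claimed CCDF. The only nonroutine step here is recognizing the inverse transform; everything else is bookkeeping.
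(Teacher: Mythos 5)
Your proposal is correct, and all the steps check out: the marked PGFL with the Erlang transform $(1+a)^{-N}$, the boundary terms vanishing in the integration by parts (since $2/\alpha<1$ and $1-(1+u)^{-N}\sim Nu$), the Beta integral $B\bigl(1-\tfrac{2}{\alpha},N+\tfrac{2}{\alpha}\bigr)=\Gamma\bigl(1-\tfrac{2}{\alpha}\bigr)\Gamma\bigl(N+\tfrac{2}{\alpha}\bigr)/\Gamma(N+1)$, and the L\'evy inversion of $e^{-A\sqrt{s}}$ for $\alpha=4$. However, your route through the spatial integral differs from the one the paper points to. The paper omits the proof and refers to the standard interference computation in Haenggi's monograph, flagging the fractional-moment identity $\mathbb{E}[H^{m}]=\Gamma(N+m)/\Gamma(N)$ for $H\sim\mathrm{Erlang}(N,1)$: in that route one first does the spatial integral for a \emph{fixed} fading value $h$ (giving $\pi\Gamma\bigl(1-\tfrac{2}{\alpha}\bigr)(s\eta P_D h)^{2/\alpha}$), and only then takes the expectation over the Erlang mark, which is exactly the fractional moment $\mathbb{E}\bigl[H^{2/\alpha}\bigr]=\Gamma\bigl(N+\tfrac{2}{\alpha}\bigr)/\Gamma(N)$. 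You swap the order: marginalize the Erlang mark first via its Laplace transform, then face a Beta integral; the two computations are reconciled by $\Gamma(N+1)=N\Gamma(N)$. What your version buys is self-containedness — no Fubini interchange of the fading expectation with the divergent-looking spatial integral, and no fractional moments, only an elementary integration by parts and the Beta function. What the paper's version buys is brevity for a reader who already knows the single-fading-variable shot-noise formula $\mathcal{L}_I(s)=\exp\bigl(-\pi\lambda\,\mathbb{E}[h^{2/\alpha}]\Gamma\bigl(1-\tfrac{2}{\alpha}\bigr)s^{2/\alpha}\bigr)$, since then only the one-line moment computation is new. Your CCDF derivation (recognizing the stable-$\tfrac12$/L\'evy law and integrating its density) is the standard argument and matches what the paper implicitly relies on.
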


Proposition \ref{proposition: Z_F_distribution} is proved by using a approach  similar to that in \cite{Haenggi.book} for deriving the interference distribution  in a PPP, with the notice that for a random variable $H\sim Erlang(N,1)$, $\mathbb{E}\left(H^m\right)=\frac{\Gamma(N+m)}{\Gamma(N)}$, and thus is omitted here for brevity. It is   clear that Proposition \ref{proposition: Z_F_distribution} holds for both Type-I and Type-II network models.  By increasing $N$ in (\ref{eq: Z_F_CCDF}),  the term $\frac{\Gamma(N+2/\alpha)}{\Gamma(N)}$ increases, and thus the CCDF of $Z_F$ increases for a given $z>0$, as expected.  Moreover,  due to the singularity of the path-loss law $|X|^{-\alpha}$ at the origin,   the average energy
arrival rate is $\mathbb{E}(Z_F)=\infty$. However, this does not necessarily mean that
each wireless node can always harvest sufficient energy, as the probability that
a wireless node can be very close to an AP is very small in any frame in general.
In addition, although from Proposition \ref{proposition: Z_F_distribution}, the distribution of $Z_F$ is identical for each wireless node in each frame $F$, since the harvested energy of each  wireless node   comes from the same
set of APs in $\mathbb{R}^2$ in all frames, it is easy to verify that in both Type-I and Type-II network models, for each wireless node, its harvested energy $Z_F$'s  are not mutually independent over time frames in general;
and for any two wireless nodes locating in different locations in space $\mathbb{R}^2$, their harvested energy are also not mutually independent  in each frame in general.
Since whether a wireless node can transmit in the UL  and the corresponding transmission performance are both strongly depend on the characteristics of $Z_F$'s, similar to the case in \cite{Che.TWC.14}, such correlations  between $Z_F$'s over both time frames and space yield challenges for tractable analysis of the wireless nodes' communication performance as well as the system throughput.

From \cite{Gong.TMC.14},  we observe similar correlation    between  $Z_F$'s over  both time frames and space $\mathbb{R}^2$,  which is determined by the variation of the fading channels as well as  the mobility of the wireless nodes or the APs  in Type-I or Type-II network model, respectively. From (\ref{eq: Z_F_def}), due to the independently varied fading channels between any APs and any wireless nodes over all slots in all frames as well as the independent location change of either the wireless nodes or the APs over frames in the considered models, which
can decorrelate the distance between any APs and any wireless nodes over frames, it is thus easy to verify   that $Z_F$'s correlations over both time frames and space are  weak in general. Moreover, due to the serious path loss for energy transfer and the generally low  energy harvesting efficiency $\eta<1$,  the harvested energy by each wireless node is only dominated by its near  APs. By noticing that the  independent location change of either the wireless nodes or the APs  over time frames  can also  decouple each wireless node's dominated APs over time frames, it is expected that $Z_F$'s correlations over both time frames and space are very weak.
Therefore, to obtain tractable results,  we apply the following independent assumption on the harvested energy $Z_F$'s.
\begin{assumption}
In both Type-I and Type-II network models, $Z_F$'s are mutually independent for each wireless node over frames  and mutually independent for any two different wireless nodes in $\mathbb{R}^2$ in each frame.
\end{assumption}

By Assumption 1, $Z_F$'s become i.i.d. random variables over both time frames and space $\mathbb{R}^2$. We also  successfully validate the feasibility of Assumption 1 later in Section VI-A by simulation.
 In the next section, based on the i.i.d. $Z_F$'s and their identical distribution given in Proposition  \ref{proposition: Z_F_distribution},  we will focus on the system communication metrics in the UL phase, and present the formulation of  the spatial throughput maximization problem.

\section{Performance Metrics and Problem Formulation}
In this section, we focus on studying the information transmission in the UL
phase as system metrics. We first  analyze the point process formed by the wireless nodes that transmit in each slot of the UL phase, and characterize
the   successful information transmission probability of the typical wireless node in the UL.
Then by studying the effects of the design variables $N$ and $P_U$, we
formulate the spatial throughput maximization problem under  a successful information
transmission probability constraint. Since the
successful information transmission probability constraint is very complicated,
we will further simplify it by  finding equivalent constraints, which yields an equivalent spatial
throughput maximization
problem with a simpler structure, as explained later.

\subsection{Successful Information Transmission Probability}
First, we define the \emph{transmission probability}  as  the probability that the typical wireless node can transmit in the UL.
Since $S_F$ is determined by $Z_F$ in both battery-free and battery-deployed cases, given in (\ref{eq:  S_F_No_Storage}) and (\ref{eq: S_F_Storage}), respectively, under Assumption 1 with i.i.d $Z_F$'s over time frames for each wireless nodes, it is easy to verify that   $\{Z_F\}_{1\leq F \leq \infty}$  and thus $\{S_F\}_{1\leq F \leq \infty}$ is ergodic over frame $F$ for both battery-free and battery-deployed cases.  As a result, as in \cite{Huang.IT.13}, since only the wireless nodes  with
$S_{F}\geq P_U$ can transmit to their associated APs,    the transmission probability, denoted by
$\rho$,   is defined as
\begin{equation}
\rho=\lim_{n \to \infty} \frac{1}{n} \sum_{F=1}^{n}\mathbb{P}(S_{F}\geq P_U).
\label{eq: trans_prob_def}
\end{equation}

From Section~II-A,  in both Type-I and Type-II network models, if a wireless node decides to transmit in the UL  based on the transmission probability $\rho$, it randomly selects one slot from the total $T-N$ slots in the UL phase to transmit. Thus, in  the UL phase under both network models, the point process   formed by the  wireless nodes that transmit in each time slot is of the identical active wireless node density,  which is denoted by $\lambda_a$ and given as  \vspace{-0.05in}
\begin{equation}
\setlength{\abovedisplayskip}{7pt}
\setlength{\belowdisplayskip}{7pt}
 \lambda_{a}=\frac{\lambda_w \rho}{T-N}. \label{eq: lambda_a}
\end{equation}
Due to the correlated harvested energy for different wireless nodes in each frame, as discussed in Section II-B, the point process formed by the active wireless nodes in each UL slot is not a PPP in general. However,  \emph{as a direct result by applying Assumption 1 with i.i.d. $Z_F$'s for different wireless nodes in each frame, the active wireless nodes' transmissions in each UL slot become independent, which yields a homogeneous PPP   for each UL slot, denoted by $\Phi(\lambda_a)$, of identical density $\lambda_{a}$, in both Type-I and Type-II network models.}   We also successfully validate such PPP assumption in the UL slot later in Section VI-A by simulation.

Next, since each active wireless node  only selects one slot in the UL phase to transmit,
we focus on a particular slot in the UL, and  analyze the typical wireless node's  information transmission performance in the UL based on the PPP $\Phi(\lambda_a)$, under both Type-I and Type-II network models.
 Similar to   the DL phase studied in Section II-B, due to the stationarity of
$\Phi(\lambda_{a})$, we assume the typical  wireless node's
associated AP  is located at the origin in the UL phase, without loss
of generality. For ease of notation, we omit the time slot index  and use $h_m$ to denote   the Rayleigh fading channel from the typical wireless node locating at $m \in \mathbb{R}^2$ to the origin. Suppose  $|m|=r$ is the random distance between the typical wireless node and its
associated AP. Let  $\sigma^2>0$ be the noise power. We then define the {\it
successful information transmission probability} as $P_{suc}$, which gives the
probability that the
received SINR   at the typical AP is no smaller than the target level $\beta$
and can be
written as
\begin{align}
 P_{\textrm{suc}}=\mathbb{P}\left( \frac{P_U h_m
r^{-\alpha}}{\sum_{Y\in \Phi(\lambda_{a}),Y\neq m} P_U h_{Y}
|Y|^{-\alpha}+\sigma^2}\geq \beta \right). \label{eq: P_suc_def}
\end{align}
Since the typical wireless node is
associated with its nearest AP, i.e., no other APs can be closer
than $r$, the probability density function (pdf) of $r$ can be easily
found by using the null probability of a PPP, which is given by $ f_r(r)=2\pi
\lambda_{AP} r e^{-\lambda_{AP} \pi r^2}$ for $r\geq0$ \cite{Andrews.COM.11}.

Given a generic transmission probability $\rho$, we derive the expression
of $P_{\textrm{suc}}$,
defined in (\ref{eq: P_suc_def}).
By applying the PGFL of a PPP \cite{Stoyan.SG.95},
we  explicitly express  $P_{\textrm{suc}}$
for a given $\rho$  in the following proposition.
\begin{proposition} \label{proposition: P_suc}
Given the transmission probability $\rho$,
the successful information transmission probability for the typical wireless node
is %\vspace{-0.05in}
\begin{equation}
\setlength{\abovedisplayskip}{7pt}
\setlength{\belowdisplayskip}{7pt}
 P_{suc}=\pi \lambda_{AP} \int_0^{\infty} e^{-ax} e^{-bx^{\frac{\alpha}{2}}}
\,dx,
\label{eq: P_suc_general} %\vspace{-0.05in}
\end{equation}
where $a=\pi\lambda_{a}\kappa+\pi\lambda_{AP}=\frac{\pi \kappa \lambda_w
\rho}{T-N}+\pi \lambda_{AP}$, with
$\kappa=\beta^{\frac{2}{\alpha}}\int_0^{\infty}\frac{1}{1+u^{\frac{\alpha}{2}}}
\,du$, and $b=\frac{\beta \sigma^2}{P_U}$. When $\alpha=4$,
(\ref{eq: P_suc_general}) admits a closed-form expression with %\vspace{-0.05in}
\begin{equation}
\setlength{\abovedisplayskip}{7pt}
\setlength{\belowdisplayskip}{7pt}
   P_{suc}= G\exp\left(\Upsilon^2/2 \right) Q(\Upsilon), \label{eq:
P_suc_4} %\vspace{-0.05in}
\end{equation}
where $G=\pi^{\frac{3}{2}} (\beta \sigma^2)^{-\frac{1}{2}}\lambda_{AP}
\sqrt{P_U}$,  $\Upsilon=\frac{G}{\sqrt{2 \pi}}+\frac{\pi^2\sqrt{P_U}
\lambda_w\rho}{2(T-N)\sqrt{2 \sigma^2}}$, and
$Q(x)=\frac{1}{\sqrt{2 \pi}}\int_x^{\infty}\exp\left(-\frac{u^2}{2} \right)\,
du$ is the standard Gaussian tail probability.
\end{proposition}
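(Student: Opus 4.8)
The plan is to follow the standard stochastic-geometry route for an SINR coverage probability: condition on the link distance $r$, use the exponential law of the desired-link fading $h_m$ to convert the SINR event into an exponential moment of the interference, evaluate that moment as a Laplace transform of the interference through the PGFL of $\Phi(\lambda_a)$, and then decondition on $r$ using its pdf. First I would rewrite the event in \eqref{eq: P_suc_def}. Writing $I=\sum_{Y\in\Phi(\lambda_a),\,Y\neq m}h_Y|Y|^{-\alpha}$ for the aggregate interference and rearranging the SINR inequality isolates $h_m\ge \beta r^\alpha I+\beta r^\alpha\sigma^2/P_U$. Since $h_m\sim\mathrm{Exp}(1)$ is independent of $I$ and of $r$, conditioning on $r$ and $I$ gives
\[
\mathbb{P}(\mathrm{success}\mid r,I)=\exp\!\left(-\tfrac{\beta\sigma^2}{P_U}r^\alpha\right)\exp\!\left(-\beta r^\alpha I\right),
\]
so taking the expectation over $I$ peels off the noise factor and leaves the interference Laplace transform $\mathcal{L}_I(s)$ evaluated at $s=\beta r^\alpha$.

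The core step is evaluating $\mathcal{L}_I(s)$. By Slivnyak's theorem the interferers seen from the typical receiver still form the PPP $\Phi(\lambda_a)$, so the PGFL together with $\mathbb{E}_h[e^{-sh|y|^{-\alpha}}]=(1+s|y|^{-\alpha})^{-1}$ (unit-mean exponential fading) yields
\[
\mathcal{L}_I(s)=\exp\!\left(-\lambda_a\!\int_{\mathbb{R}^2}\!\frac{s|y|^{-\alpha}}{1+s|y|^{-\alpha}}\,dy\right).
\]
Passing to polar coordinates and rescaling $v=s^{1/\alpha}w$ extracts the factor $s^{2/\alpha}$ and reduces the radial integral to $\int_0^\infty w/(w^\alpha+1)\,dw$; the substitution $u=w^2$ turns this into $\tfrac12\int_0^\infty(1+u^{\alpha/2})^{-1}\,du$. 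With $s=\beta r^\alpha$ the factor $s^{2/\alpha}=\beta^{2/\alpha}r^2$ supplies exactly the $\beta^{2/\alpha}$ in the definition of $\kappa$, so that $\mathcal{L}_I(\beta r^\alpha)=\exp(-\pi\lambda_a\kappa r^2)$.

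Combining the noise and interference pieces, the conditional success probability is $\exp(-\pi\lambda_a\kappa r^2-\tfrac{\beta\sigma^2}{P_U}r^\alpha)$. Deconditioning against $f_r(r)=2\pi\lambda_{AP}\,r\,e^{-\pi\lambda_{AP}r^2}$ and substituting $x=r^2$ merges the two quadratic-in-$r$ exponents into the single coefficient $a=\pi\lambda_a\kappa+\pi\lambda_{AP}$ and gives precisely \eqref{eq: P_suc_general} with $b=\beta\sigma^2/P_U$. For $\alpha=4$ the integrand becomes $e^{-ax-bx^2}$; completing the square in $x$ produces a Gaussian integral on $[0,\infty)$ whose lower limit is $a/\sqrt{2b}$. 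Identifying this limit with $\Upsilon$, the completed-square constant with $\exp(\Upsilon^2/2)$, and the leading factor $\pi\lambda_{AP}\sqrt{\pi/b}$ with $G$ reproduces \eqref{eq: P_suc_4}; it only remains to check that the two additive terms of $a/\sqrt{2b}$ match $G/\sqrt{2\pi}$ and the stated $\rho$-dependent term, using $\kappa=\tfrac{\pi}{2}\sqrt{\beta}$ when $\alpha=4$.

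The hard part will be the interference Laplace transform rather than the final algebra. One must invoke Slivnyak to argue the reduced Palm process is again $\Phi(\lambda_a)$, ensure convergence of the radial integral---finiteness of $\int_0^\infty w/(w^\alpha+1)\,dw$ is exactly what the assumption $\alpha>2$ buys---and perform the scaling substitution that isolates $s^{2/\alpha}$, which is what allows the later $r$-integration to close cleanly under $x=r^2$. Once $\mathcal{L}_I$ is in hand, the general formula follows by a single substitution, and the $\alpha=4$ case is routine Gaussian-integral bookkeeping via the $Q$-function.
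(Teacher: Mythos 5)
Your proposal is correct and follows essentially the same route as the paper, which omits the proof and simply points to Theorem 2 of \cite{Andrews.COM.11}: condition on the nearest-AP distance $r$, exploit the exponential fading $h_m$ to reduce the SINR event to the interference Laplace transform, evaluate it via the PGFL of $\Phi(\lambda_a)$, decondition with $f_r(r)=2\pi\lambda_{AP}re^{-\pi\lambda_{AP}r^2}$ under $x=r^2$, and complete the square for $\alpha=4$. You also correctly adapted the cited downlink argument to the UL setting here by integrating the interference over all of $\mathbb{R}^2$ (no exclusion ball around the receiving AP, since interfering nodes need not be farther than $r$), which is exactly what produces the paper's $\kappa=\beta^{2/\alpha}\int_0^{\infty}(1+u^{\alpha/2})^{-1}\,du$ with lower limit $0$.
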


Proposition \ref{proposition: P_suc} is proved using a method similar to that for proving  Theorem 2 in \cite{Andrews.COM.11}, and thus is omitted for brevity.
Clearly, Proposition \ref{proposition: P_suc} also holds for both Type-I and Type-II network models as Proposition \ref{proposition: Z_F_distribution}.
It is observed from  both (\ref{eq: P_suc_general}), for a general $\alpha$,
and (\ref{eq: P_suc_4}), for $\alpha=4$,  by
decreasing the transmission probability $\rho$, due to the reduced active wireless node density $\lambda_a$,
given in (\ref{eq: lambda_a}), the interference level in the UL phase is
reduced, and thus $P_{suc}$ is increased.

In the next subsection, by applying   Proposition  \ref{proposition: P_suc}, we
formulate the spatial throughput maximization problem.
It is worth noting that since identical  DL and UL performance  are obtained for Type-I and Type-II network models,  same spatial throughput maximization  problem formulation and corresponding solutions are obtained for the two models, and thus we will not differentiate the two models in the sequel of this paper.
\vspace{-0.05in}

\subsection{Spatial Throughput Maximization Problem}
We  focus on the effects of   the number of slots $N$
assigned to the DL  phase  and the UL transmit power $P_U$ to investigate the
interesting tradeoff between the energy transfer in
the DL and the information transfer in the UL.
By increasing $N$ at a fixed $P_U$, from (\ref{eq: Z_F_def}),
we observe that the harvested energy
$Z_F$ in the DL increases, and thus  the transmission probability
$\rho$ is increased. As a result, the successful information transmission probability
 $P_{suc}$ in the UL, given in (\ref{eq: P_suc_general}) for a general $\alpha$ or
(\ref{eq: P_suc_4}) for $\alpha=4$,  is decreased.
Similarly, by increasing $P_U$ at a
fixed $N$, we observe  a decreased transmission probability $\rho$ in (\ref{eq:
trans_prob_def}), and thus an increased   $P_{suc}$ in the UL.
In the following, we  design   $N$ and  $P_U$ to optimize the
network performance.

Specifically,  to ensure the QoS for each wireless node, we apply a  \emph{successful information
 transmission probability constraint} such that $P_{suc}\!\geq \!1-\epsilon$, with
$\epsilon \!\ll \!1$, for any    DL slot allocation $N$ and UL transmit power
$P_U$.
Similar to \cite{Huang.IT.13}, we define the \emph{spatial throughput} of the considered WPCN as
the total throughput that is achieved by the wireless nodes over  all
the slots in the UL phase per unit network area  (bps/Hz/unit-area).
Moreover, given the receiver SINR threshold $\beta$,
we suppose  the uplink information transmission is successful,  if  the   information can be coded at a rate $\log_2(1+\beta^{\tau})$
with $\tau\!\geq \!1$. Assuming $\tau\!=\!1$, the spatial throughput is then given by
\begin{align}
R(N,P_U)&=\lambda_{a}(T-N)    \log_2(1+\beta)   \nonumber \\
&\overset{(a)}{=} \lambda_w \rho \log_2(1+\beta), \label{eq:
spatial_throughput_def}
\end{align}
where procedure $(a)$  is obtained by  applying   (\ref{eq: lambda_a}).
To be precise, $R(N,P_U)$ should be scaled by the successful information transmission probability
$P_{suc}$;  but since $P_{suc}$ is ensured to be very close to
$1$ given $\epsilon\ll 1$, this factor is omitted for ease of  notation as
in \cite{Lee.TWC.2013} and \cite{Huang.IT.13}.
It is also easy to find that due to $\rho$ defined in (\ref{eq: trans_prob_def}),  $R(N,P_U)$ is a function
of $N$ and $P_U$.
Moreover, in each frame consisting of $T$ slots, since we should at least
assign one slot to UL phase for information transmission, we have $N\leq T-1$.
Hence, under the successful
information transmission probability constraint, we formulate the
spatial throughput maximization problem as
\begin{align}
\textrm{(P1)}:~~~\mathop{\mathrm{max.}}_{N, P_U}&~~ R(N,P_U) \nonumber \\
\mathrm{s.t.} & ~~  P_{suc}\geq 1-\epsilon, \nonumber \\
& ~~ N\in\{1,...,T-1\}, \nonumber \\
& ~~ 0 < P_U \leq P_{\max}. \nonumber
\end{align}

It is noted that Problem (P1) involves integer programming, due to
$N\in\{1,...,T-1\}$.
Moreover,  since   the  expression of $P_{suc}$   in Proposition
\ref{proposition: P_suc} is very complicated, it is  difficult to analyze the
effects of $N$ and $P_U$ to   ensure $P_{suc}\geq 1-\epsilon$, and  thus solve
Problem (P1). In the following proposition for  the case of $\alpha=4$, which is
a typical channel fading exponent in wireless communications,    we successfully
find the equivalent constraints to $P_{suc}\geq 1-\epsilon$, which can be used
for formulating an equivalent problem to Problem (P1) with a simpler structure.

 \begin{proposition} \label{proposition: P_suc_constraint_Equivalent}
When  $\alpha=4$, as  $\epsilon \rightarrow 0$, the successful
information transmission probability
 constraint $P_{suc}\geq 1-\epsilon$  is equivalent to a transmission
probability constraint $\lambda_w \rho
\leq K_{\epsilon}\lambda_{AP} (T-N)$ with $P_U\geq \frac{g_0^2 \beta
\sigma^2}{\pi^3 \lambda_{AP}^2}$ and
$K_{\epsilon}=\frac{2\epsilon}{1-\epsilon}\frac{\beta^{-\frac{1}{2}}}{\pi}$,
where
$g_0$ is the unique solution to
$g_0 Q\left(\frac{g_0}{2 \pi}\right) =(1-\epsilon)\exp \left(-\frac{g_0^2}{4
\pi}\right)$.
 \end{proposition}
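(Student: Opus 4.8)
The plan is to start from the closed-form expression in Proposition~\ref{proposition: P_suc} for $\alpha=4$, namely $P_{suc}=G\exp(\Upsilon^2/2)Q(\Upsilon)$, and separate the dependence on the two design quantities. Writing $\Upsilon=\Upsilon_0+\Delta$ with $\Upsilon_0:=G/\sqrt{2\pi}$ (the value of $\Upsilon$ at $\rho=0$) and $\Delta:=\frac{\pi^2\sqrt{P_U}\lambda_w\rho}{2(T-N)\sqrt{2\sigma^2}}\ge 0$, I note that $G$ and $\Upsilon_0$ are functions of $P_U$ only, while $\Delta$ is linear and increasing in $\rho$. The constraint $P_{suc}\ge 1-\epsilon$ will then be decomposed into (i) a \emph{feasibility} condition on $P_U$ guaranteeing that the supremum of $P_{suc}$ over $\rho\ge0$ reaches $1-\epsilon$, and (ii) an upper bound on $\rho$ for which $P_{suc}$ stays above $1-\epsilon$. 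These will turn out to be exactly the minimum transmit power constraint and the transmission probability constraint.

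The first key tool is monotonicity, obtained from the standard Mills-ratio bounds $\frac{x}{1+x^2}\frac{e^{-x^2/2}}{\sqrt{2\pi}}<Q(x)<\frac{1}{x}\frac{e^{-x^2/2}}{\sqrt{2\pi}}$ for $x>0$. The upper bound gives $\frac{d}{dx}\bigl[\exp(x^2/2)Q(x)\bigr]=x\exp(x^2/2)Q(x)-\frac{1}{\sqrt{2\pi}}<0$, so $\exp(x^2/2)Q(x)$ is strictly decreasing; hence $P_{suc}$ is strictly decreasing in $\rho$ through $\Delta$, and the constraint caps $\rho$ from above. Its largest value, attained at $\rho=0$, is $P_{suc}^{\max}(P_U)=G\exp(\Upsilon_0^2/2)Q(\Upsilon_0)$. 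Reparametrizing by $w=\Upsilon_0$ (so that $G=\sqrt{2\pi}\,w$ and $P_{suc}^{\max}=\sqrt{2\pi}\,w\exp(w^2/2)Q(w)$) and using the lower Mills bound, I will show $\frac{d}{dw}\bigl[w\exp(w^2/2)Q(w)\bigr]>0$, so $P_{suc}^{\max}$ is strictly increasing in $P_U$ and ranges over $(0,1)$ (the limit $1$ following from $\exp(w^2/2)Q(w)\sim 1/(\sqrt{2\pi}\,w)$). Consequently there is a unique threshold $P_U^\ast$ with $P_{suc}^{\max}(P_U^\ast)=1-\epsilon$; solving $G=g_0$ at this point yields $P_U^\ast=\frac{g_0^2\beta\sigma^2}{\pi^3\lambda_{AP}^2}$, and the equation $P_{suc}^{\max}(P_U^\ast)=1-\epsilon$ is precisely the defining relation for $g_0$. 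This delivers constraint~(ii).

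For the transmission-probability bound I will use the $\epsilon\to0$ regime. Since $g_0\to\infty$ as $\epsilon\to0$ (so that $P_{suc}^{\max}\to1$), the feasibility condition forces $\Upsilon\ge\Upsilon_0$ to be large, making the leading Mills asymptotic $\exp(\Upsilon^2/2)Q(\Upsilon)=\frac{1}{\sqrt{2\pi}\,\Upsilon}\bigl(1+o(1)\bigr)$ accurate. Substituting gives $P_{suc}\approx \frac{G/\sqrt{2\pi}}{\Upsilon}=\frac{\Upsilon_0}{\Upsilon_0+\Delta}=\frac{1}{1+\Delta/\Upsilon_0}$, so $P_{suc}\ge1-\epsilon$ becomes $\Delta/\Upsilon_0\le\frac{\epsilon}{1-\epsilon}$. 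Inserting the expressions for $\Delta$ and $\Upsilon_0$, the factors $\sqrt{P_U}$ and $\sigma$ cancel and the ratio reduces to $\frac{\pi\sqrt{\beta}\,\lambda_w\rho}{2(T-N)\lambda_{AP}}$; this cancellation of $P_U$ is what makes the resulting bound independent of the power. Rearranging then gives $\lambda_w\rho\le K_\epsilon\lambda_{AP}(T-N)$ with $K_\epsilon=\frac{2\epsilon}{1-\epsilon}\frac{\beta^{-1/2}}{\pi}$, which is constraint~(i).

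The main obstacle I expect is making the asymptotic step rigorous: I must verify that $g_0\to\infty$ as $\epsilon\to0$ and that the neglected higher-order Mills-ratio corrections contribute only $o(\epsilon)$ uniformly over the feasible range of $(N,P_U,\rho)$, so that the approximate equivalence becomes exact in the limit. A secondary point is to argue that (i) and (ii) are \emph{jointly} necessary and sufficient --- that feasibility (from the monotonicity of $P_{suc}^{\max}$) combined with the $\rho$-cap exactly reproduces $P_{suc}\ge1-\epsilon$ --- rather than each being merely necessary.
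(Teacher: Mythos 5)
Your proposal is correct in substance and follows the same overall route as the paper's proof: both start from $P_{suc}=G\exp(\Upsilon^2/2)Q(\Upsilon)$, use monotonicity of $\exp(x^2/2)Q(x)$ to turn $P_{suc}\ge 1-\epsilon$ into a cap on $\Upsilon$, identify feasibility of that cap with $G\ge g_0$ (the minimum-power constraint), and then invoke Mills-ratio behavior of $Q$ as $\epsilon\to 0$ to reduce the cap to the stated bound on $\lambda_w\rho$; your final algebra recovering $K_\epsilon$ and $P_{\min}$ matches the paper's exactly. The differences lie in two sub-arguments, and they cut in opposite directions. For the feasibility threshold, the paper's second lemma argues geometrically about where the curve $y_0(x)=\exp(-x^2/2)/Q(x)$ meets the line $\sqrt{2\pi}x$, with qualitative ``gap'' reasoning tied to a figure; your argument --- strict monotonicity of $w\mapsto\sqrt{2\pi}\,w\exp(w^2/2)Q(w)$ from $0$ up to $1$, via the lower Mills bound --- is cleaner and genuinely rigorous, and it yields the self-consistent defining relation $g_0Q\bigl(g_0/\sqrt{2\pi}\bigr)=(1-\epsilon)\exp\bigl(-g_0^2/(4\pi)\bigr)$; the $Q\bigl(g_0/(2\pi)\bigr)$ appearing in the proposition statement is a typo, since the paper's own proof sets $q_0=g_0/\sqrt{2\pi}$. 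For the asymptotic step, however, note what the paper does instead of a leading-order expansion requiring $g_0\to\infty$: its third lemma squeezes $\sqrt{2\pi}x< y_0(x)\le \frac{\sqrt{2\pi}x}{1-\epsilon}$ on the whole feasible domain, with the upper bound extracted from the feasibility condition itself, and then simply sets $q=\frac{G}{(1-\epsilon)\sqrt{2\pi}}$. The obstacle you flag --- proving the neglected Mills corrections are uniformly $o(\epsilon)$ --- is in fact insurmountable as stated: since $g_0=\Theta(1/\sqrt{\epsilon})$, for $P_U$ near $P_{\min}$ one has $\Upsilon_0=\Theta(1/\sqrt{\epsilon})$, so the correction to the cap is $\Theta(\epsilon)$, the same order as the cap itself (at $P_U=P_{\min}$ exactly, strict monotonicity in $\rho$ forces $\rho=0$, not $\lambda_w\rho\le K_\epsilon\lambda_{AP}(T-N)$). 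But this is not a defect relative to the paper: its squeeze only brackets $q$ between $\Upsilon_0$ and $\Upsilon_0/(1-\epsilon)$ and adopts the upper endpoint, so the paper's ``equivalence'' holds only in the same loose asymptotic sense as yours.
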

 \begin{proof}
   Please refer to Appendix \ref{appendix: proof_euqivalent_trans_prob}.
 \end{proof}

 \begin{remark}
 Since we assume $\epsilon\ll 1$  to assure  satisfied QoS in the UL
transmission,
 Proposition \ref{proposition: P_suc_constraint_Equivalent} can be well applied
 in our considered system.  Moreover, the noise power $\sigma^2\neq 0$ provides
a valid minimum transmit power level   for $P_U\leq P_{\max}$, denoted by
$P_{\min}=\frac{g_0^2 \beta \sigma^2}{\pi^3 \lambda_{AP}^2}$, which is
important to assure  a sufficiently large $P_{suc}$ in a noise-dominant
network.
To avoid the trivial case without any valid decision  for $P_U$,
we assume $P_{\min}\leq P_{\max}$.
 \end{remark}

Finally, for ease of analysis, we focus on the case of
$\alpha=4$ in the sequel.\footnote{The value of $\alpha$ does not
affect the main results of this paper.  Moreover, for other cases with
$\alpha\neq 4$, the spatial
throughput maximization problem can be similarly studied by using the
modeling methods provided in
this paper.}  By applying Proposition \ref{proposition:
P_suc_constraint_Equivalent},  we find an equivalent
problem to Problem (P1), which is given by
\vspace{-0.02in}
\begin{align}
\textrm{(P2)}:~~~\mathop{\mathrm{max.}}_{N, P_U}&~~ R(N,P_U) \nonumber \\
\mathrm{s.t.} & ~~  \lambda_w \rho
\leq K_{\epsilon}\lambda_{AP} (T-N),
\nonumber \\
& ~~ N\in\{1,...,T-1\}, \nonumber \\
& ~~ P_{\min}\leq P_U \leq P_{\max}. \nonumber
\end{align}
Clearly, Problem (P2) has transformed the successful information transmission
probability constraint in (P1) to an equivalent transmission probability
constraint on $\rho$. In the next two
sections, we solve Problem (P2) for
both battery-free and battery-deployed cases,  and study the effects of battery
storage on the achievable throughput. \vspace{-0.07in}

\section{Wireless Powered Information Transmission in  Battery-Free Case}
Due to the limited circuit size  of some wireless devices,  it is hard to
install a sizable battery for these devices to store the harvested energy. Thus, the use
of battery-free wireless devices  is growing in many wireless applications
(e.g., the body-worn sensors for health monitoring).
In this section, we focus on the spatial throughput maximization problem for the
battery-free wireless nodes.
We  first derive the transmission probability $\rho$ and the spatial throughput
$R(N,P_U)$. We then   substitute  $\rho$ and $R(N,P_U)$ into Problem
(P2) and solve the spatial throughput maximization problem, by finding the
optimal solution of  $N^{*}$  and $P_U^{*}$.

First, we derive the transmission probability $\rho$.  In the battery-free  case,
as introduced in Section II-A, the wireless node operates with its available
energy according to (\ref{eq: S_F_No_Storage}). Thus, by substituting (\ref{eq:
S_F_No_Storage})  into (\ref{eq: trans_prob_def}), we obtain the
expression of $\rho$ in the battery-free case as \vspace{-0.05in}
\begin{align}
 \rho&=\lim_{n \to \infty} \frac{1}{n} \sum_{F=1}^{n}\mathbb{P}(Z_{F}\geq P_U)
\overset{(a)}{=}\mathbb{P}(Z_{F}\geq P_U) \nonumber \\
& \overset{(b)}{=} \textrm{erf}\left(\frac{\Gamma(N+2/\alpha) \lambda_{AP}}{2\Gamma(N)}\sqrt{\frac{\pi^3 P_D\eta}{P_U}} \right), \label{eq: rho_No_storage}
\end{align}
where procedure $(a)$ follows from our assumption in Section II-B, which gives i.i.d. $Z_F$'s  for the typical wireless node over frames, and procedure $(b)$ follows from
(\ref{eq: Z_F_CCDF}), by replacing $z$ with
$P_U$.  Note that since the error function $\textrm{erf}(x)\rightarrow 1$ when
$x$ is sufficiently large,
from (\ref{eq: rho_No_storage}), $\rho\rightarrow 1$, by adopting sufficiently
large $N$, $\lambda_{AP}$, and/or $P_D$. By substituting (\ref{eq:
rho_No_storage}) into (\ref{eq: spatial_throughput_def}),
we obtain the spatial throughput for the battery-free case as
%\vspace{-0.05in}
\begin{align}
R(N,P_U)\!=\!\lambda_w\textrm{erf}\left(\!\frac{\Gamma(N\!+\!2/\alpha) \lambda_{AP}}{2\Gamma(N)}\sqrt{\frac{\pi^3 P_D\eta}{P_U}} \!\right)\! \log_2(1\!+\!\beta).
\label{eq: throughput_batter_free}
\end{align}
In addition,  by substituting (\ref{eq: rho_No_storage}) into (\ref{eq:
P_suc_4}),  the expression of  $P_{suc}$ in the battery-free case can also be
easily obtained.

Next, by substituting $\rho$, given by (\ref{eq: rho_No_storage}),  and
$R(N,P_U)$, given by (\ref{eq: throughput_batter_free}),   into Problem  (P2),
we obtain the spatial throughput
maximization problem for the battery-free case as
\begin{align}
\textrm{(P3)}\!:\mathop{\mathrm{max.}}_{N,
P_U}&~\!\lambda_w\textrm{erf}\!\left(\!\frac{\Gamma(N\!+\!\frac{2}{\alpha}) \lambda_{AP}}{2\Gamma(N)}\sqrt{\frac{\pi^3 P_D\eta}{P_U}}\! \right)\! \log_2(1\!+\!\beta) \nonumber \\
\mathrm{s.t.} &~\!   \frac{\lambda_w}{K_{\epsilon}} \textrm{erf}\!\left(\!\frac{\Gamma(N\!+\!\frac{2}{\alpha}) \lambda_{AP}}{2\Gamma(N)}\sqrt{\frac{\pi^3 P_D\eta}{P_U}} \!\right)\!\!\leq \!\lambda_{AP} (T\!\!-\!\!N), \nonumber \\
\label{eq: rho_constraint_BF}\\
& ~N\in\{1,...,T-1\}, \nonumber \\
& ~ P_{\min}\leq P_U \leq P_{\max}.
\nonumber
\end{align}

It is observed that in Problem (P3), both the objective function and  the
transmission probability constraint, given by (\ref{eq: rho_constraint_BF}),
are related to the error function.
Note that the error function $\textrm{erf}(x)$  increases over $x \geq0$, and
then converges to its maximum value $1$ when $x$ is sufficiently large. Suppose
at  $x=v_e$, we have $1-\textrm{erf}(v_e)=10^{-n}$, where we assume
$n>0$ is sufficiently large such that when $x\geq v_e$, $\textrm{erf}(x) = 1$
holds with an ignorable absolute error, which is no larger than $10^{-n}$.
Under such a tight approximation, to help solve Problem (P3), we  calculate
$\textrm{erf}(x)$ over $x\geq 0$ as:
\begin{equation}
  \textrm{erf}(x)=\left\{
   \begin{array}{ll}
   \textrm{erf}(x),  &\textrm{~if~}x<v_e\\
   1,  &\textrm{~if~}x\geq v_e.
   \end{array}
  \right.  \label{eq: erf}
\end{equation}

It is also observed that in Problem (P3), the maximum achievable spatial
throughput over all $N\in\{1,...,T-1\}$
and $P_U\in[P_{\min},P_{\max}]$  is $\lambda_w \log_2(1+\beta)$, and it is
achieved when the transmission
probability $\rho=1$, i.e.,
\begin{equation}
\frac{\Gamma(N+2/\alpha) \lambda_{AP}}{2\Gamma(N)}\sqrt{\frac{\pi^3 P_D\eta}{P_U}} \geq v_e, \label{eq:
rho_1_v_e}
\end{equation}
by applying (\ref{eq: erf}) to (\ref{eq: rho_No_storage}).
Moreover, for any given wireless node density $\lambda_w>0$, if
the AP density $\lambda_{AP}$ is sufficiently large, such that
$K_{\epsilon}\lambda_{AP} (T-N)\geq \lambda_w$ holds
for any $N\in\{1,...,T-1\}$, the transmission probability constraint given in
(\ref{eq: rho_constraint_BF}) of
Problem (P3) is always satisfied, and thus any  $N\in\{1,...,T-1\}$ and
$P_U\in[P_{\min},P_{\max}]$ that satisfy (\ref{eq: rho_1_v_e})   is optimal to
Problem (P3).
However, if $\lambda_{AP}$ is too small,  the transmission probability
constraint in (\ref{eq: rho_constraint_BF})
may not be able to be satisfied with
any $N\in\{1,...,T-1\}$ and $P_U\in[P_{\min},P_{\max}]$, and thus Problem (P3)
has no solution.
Therefore, in the following theorem, by taking the wireless node density as a
reference,  we divide the   AP density into three regimes, each with
different optimal solutions to Problem (P3), and present for these  optimal solutions
the resulting  maximized spatial throughput in each regime.

\begin{theorem}\label{theorem: P-NB}
In the battery-free case, the optimal solutions $N^{*}$ and $P_U^{*}$ to Problem (P3)
are determined as follows, where in each AP density regime,  the
corresponding maximum spatial throughput $R(N^{*},P_U^{*})$ is obtained by substituting
the optimal $N^{*}$ and $P_U^{*}$ to (\ref{eq: throughput_batter_free}).
\begin{enumerate}
\item In the {\it high AP density  regime} ($\lambda_{AP} \geq \frac{\lambda_w}{K_{\epsilon}}$),   the transmission probability constraint
given in (\ref{eq: rho_constraint_BF}) is always satisfied. The optimal solutions are given by
\begin{equation}
\!  \left\{\!\!\!
   \begin{array}{l}
    \forall N^{*}\!\!\in \!\{1,...,T\!-\!1\} \text{~and~} \forall P_U^{*}\!\in \![P_{\min}, P_{\max}] \\
    \text{that~ satisfy~(\ref{eq: rho_1_v_e}),} ~~~~~~~\textit{if~}\text{(\ref{eq: rho_1_v_e})~holds~when~}  \\
    ~~~~~~~~~~~~~~~~~~~~~~~~~~~N\!=\!T\!-\!1,P_U\!=\!P_{\min},\\%\text{~i.e.,~}  \rho=1 \\
    N^{*}\!\!=\!T\!-\!1, P_U^{*}\!=\!P_{\min},\textit{otherwise}.
   \end{array}
  \right.
\end{equation}

\item In the {\it medium AP density   regime}
($\frac{\lambda_w}{K_{\epsilon}(T-1)}\leq \lambda_{AP}
<\frac{\lambda_w}{K_{\epsilon}}$),  a unique $N_0\in \{1,...,T-2\}$ exists
such that $K_{\epsilon}\lambda_{AP}(T-(N_0+1))<\lambda_w\leq
K_{\epsilon}\lambda_{AP}(T-N_0)$. Thus,   (\ref{eq:
rho_constraint_BF}) is always satisfied when $N\leq N_0$. The optimal solutions are then given by
\begin{equation}
\!  \left\{\!\!\!
   \begin{array}{l}
    \forall N^{*}\!\in \!\{1,...,N_0\} \text{~and~} \forall P_U^{*}\!\in \![P_{\min}, P_{\max}] \\
    \text{that~ satisfy~(\ref{eq: rho_1_v_e}),} ~~~~~~~~~\textit{if~}\text{(\ref{eq: rho_1_v_e})~holds~when~}  \\
    ~~~~~~~~~~~~~~~~~~~~~~~~~~~~~N\!=\!N_0,P_U\!=\!P_{\min},\\%\text{~i.e.,~}  \rho=1 \\
    N^{*}=N_0, P_U^{*}=P_{min},~\textit{otherwise}.
   \end{array}
  \right.
\end{equation}

\item In the {\it low AP density regime}
($\lambda_{AP}<\frac{\lambda_w}{K_{\epsilon}(T-1)}$), we find (\ref{eq:
rho_constraint_BF}) cannot be satisfied with $\rho=1$. We thus obtain the following.
\begin{itemize}
 \item \emph{if} $ \lambda_w \textrm{erf}\left(\frac{\Gamma(N+2/\alpha) \lambda_{AP}}{2\Gamma(N)}\sqrt{\frac{\pi^3 P_D\eta}{P_{max}}} \right) \!> \!K_{\epsilon}\lambda_{AP} (T\!-\!N)$
at $N=1$,  no feasible solutions exist;
\item \emph{otherwise}, $N^{*}$ and $P_U^{*}$ are obtained by Algorithm~1, where $P_{s}$ in Line 4 is the unique solution to $\lambda_w
\textrm{erf}\left(\frac{\Gamma(N+2/\alpha) \lambda_{AP}}{2\Gamma(N)}\sqrt{\frac{\pi^3 P_D\eta}{P_U}} \right) \!=\! K_{\epsilon}\lambda_{AP} (T\!-\!N)$, and $\textrm{erfinv}(x)$ is the
inverse error function of $x\geq 0$.
\end{itemize}

\end{enumerate}
\end{theorem}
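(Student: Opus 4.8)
The plan is to exploit the simple structure that Problem (P3) acquires once we notice that both its objective and its constraint depend on $(N,P_U)$ only through the single increasing map $\textrm{erf}(\cdot)$. Writing $x(N,P_U)\equiv\frac{\Gamma(N+2/\alpha)\lambda_{AP}}{2\Gamma(N)}\sqrt{\pi^3 P_D\eta/P_U}$, we have $\rho=\textrm{erf}(x(N,P_U))$ from (\ref{eq: rho_No_storage}) and $R=\lambda_w\rho\log_2(1+\beta)$ from (\ref{eq: throughput_batter_free}), so maximizing $R$ is equivalent to maximizing $\rho$, i.e.\ to maximizing $x$. Since $\Gamma(N+2/\alpha)/\Gamma(N)$ is increasing in $N$ and $\sqrt{1/P_U}$ is decreasing in $P_U$, the argument $x$ is increasing in $N$ and decreasing in $P_U$; hence, absent the constraint, one would push $N$ to its largest value and $P_U$ to its smallest. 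The constraint (\ref{eq: rho_constraint_BF}) reads $\lambda_w\rho\le K_\epsilon\lambda_{AP}(T-N)$, which simultaneously caps the objective by $R\le K_\epsilon\lambda_{AP}(T-N)\log_2(1+\beta)$, a quantity \emph{decreasing} in $N$. The whole proof is then a case analysis of how these two opposing effects of $N$ balance, governed by where the ceiling $K_\epsilon\lambda_{AP}(T-N)$ sits relative to $\lambda_w$ (the largest possible value of $\lambda_w\rho$, attained at $\rho=1$).

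First I would dispose of the high-density regime $\lambda_{AP}\ge\lambda_w/K_\epsilon$. Here $K_\epsilon\lambda_{AP}(T-N)\ge K_\epsilon\lambda_{AP}\ge\lambda_w\ge\lambda_w\rho$ for every $N\le T-1$, so the constraint is vacuous and we simply maximize $\rho$ over the box. Using the approximation (\ref{eq: erf}) that $\textrm{erf}=1$ once $x\ge v_e$, the global maximum $\rho=1$ is attainable iff $x\ge v_e$ is attainable; since $x$ is largest at $N=T-1,\,P_U=P_{\min}$, this is precisely condition (\ref{eq: rho_1_v_e}) checked at that corner. If it holds, every feasible pair satisfying (\ref{eq: rho_1_v_e}) is optimal; if it fails, then $\rho<1$ throughout and we maximize $x$, giving the unique corner $N^{*}=T-1,\,P_U^{*}=P_{\min}$.

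For the medium regime I would first establish existence and uniqueness of $N_0$: the lower density bound $\lambda_{AP}\ge\frac{\lambda_w}{K_\epsilon(T-1)}$ forces $K_\epsilon\lambda_{AP}(T-1)\ge\lambda_w$ (so $N=1$ keeps the ceiling above $\lambda_w$), while the upper bound $\lambda_{AP}<\lambda_w/K_\epsilon$ forces $K_\epsilon\lambda_{AP}(T-(T-1))<\lambda_w$; since the ceiling drops by the fixed step $K_\epsilon\lambda_{AP}$ as $N$ increases by one, there is a unique integer $N_0\in\{1,\dots,T-2\}$ at which it crosses $\lambda_w$, as stated. For $N\le N_0$ the ceiling remains $\ge\lambda_w$, the constraint is again vacuous, and maximizing $\rho$ over this sub-box lands at $N=N_0,\,P_U=P_{\min}$ (yielding $\rho=1$ and the first branch when (\ref{eq: rho_1_v_e}) holds there, or the boundary point otherwise). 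The crux -- and the step I expect to be the main obstacle -- is to rule out $N>N_0$: increasing $N$ past $N_0$ does raise $x$ (hence the unconstrained $\rho$), but the ceiling now drops below $\lambda_w$, so one must show the gain in $\textrm{erf}$ never outweighs the loss of ceiling. Concretely, for $N\ge N_0+1$ the best attainable objective is $\min\big(\lambda_w\,\textrm{erf}(x(N,P_{\min})),\,K_\epsilon\lambda_{AP}(T-N)\big)\log_2(1+\beta)$, and one must compare this against $R(N_0,P_{\min})$; I anticipate this needs a careful monotonicity argument on the $\min(\cdot,\cdot)$ (it increases in $N$ on $\{1,\dots,N_0\}$ and is bounded above by the decreasing ceiling thereafter), and possibly a supplementary condition guaranteeing that the $\textrm{erf}$-term at $N_0$ already exceeds the ceiling $K_\epsilon\lambda_{AP}(T-N_0-1)$ at $N_0+1$.

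Finally, in the low regime $\lambda_{AP}<\frac{\lambda_w}{K_\epsilon(T-1)}$ the ceiling $K_\epsilon\lambda_{AP}(T-N)$ lies below $\lambda_w$ for every $N$, so $\rho=1$ is never feasible and the constraint is active throughout. I would first settle feasibility: since $\textrm{erf}(x(N,P_{\max}))$ increases in $N$ while the ceiling decreases, the quantity $\lambda_w\textrm{erf}(x(N,P_{\max}))-K_\epsilon\lambda_{AP}(T-N)$ is increasing in $N$, so it suffices to test it at $N=1$ -- if positive there, no $N$ is feasible, which is the first bullet. Otherwise, for each feasible $N$ the best feasible $\rho$ equals $\min\big(\textrm{erf}(x(N,P_{\min})),\,K_\epsilon\lambda_{AP}(T-N)/\lambda_w\big)$, the binding case being achieved at the unique $P_U=P_s$ solving $\lambda_w\textrm{erf}(x)=K_\epsilon\lambda_{AP}(T-N)$ (unique because $\textrm{erf}(x)$ is strictly monotone in $P_U$). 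Because $\lambda_w\textrm{erf}(x(N,P_{\min}))$ is increasing and the ceiling decreasing in $N$, this per-$N$ optimum is unimodal in $N$, so Algorithm~1's sweep over $N\in\{1,\dots,T-1\}$ -- evaluating the two candidate powers $P_{\min}$ and $P_s$ at each $N$ -- is guaranteed to return the global optimum; correctness here reduces to verifying that the algorithm enumerates exactly these cases.
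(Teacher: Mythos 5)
Your proposal follows essentially the same route as the paper's proof: the same monotonicity reduction (maximize the erf argument, which is increasing in $N$ and decreasing in $P_U$), the same three-regime split obtained by comparing $\lambda_w$ against the extreme values $K_{\epsilon}\lambda_{AP}$ and $K_{\epsilon}\lambda_{AP}(T-1)$ of the ceiling, the same corner analysis with the case split on (\ref{eq: rho_1_v_e}) in the high and medium regimes, and the same treatment of the low regime (infeasibility test at $(N,P_U)=(1,P_{\max})$, per-$N$ optimal power $\max(P_s,P_{\min})$, exhaustive sweep by Algorithm~1). One small remark: your unimodality claim in the low regime is true but unnecessary, since Algorithm~1 enumerates every $N\in\{1,\dots,T-1\}$; correctness only needs the per-$N$ optimization to be exact, which it is because the objective is decreasing in $P_U$ and the feasible power set is $[\max(P_s,P_{\min}),P_{\max}]$.

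The step you flag as the ``main obstacle''---ruling out $N>N_0$ in the medium regime---is a genuine gap in your proposal, and you do not close it; but it is precisely the point where the paper's own proof is weakest. The paper disposes of it by asserting that constraint (\ref{eq: rho_constraint_BF}) ``holds if and only if $N\leq N_0$,'' which silently removes all $N>N_0$ from the feasible set. That assertion is not justified as stated: for $N>N_0$ the constraint can still be satisfied by taking $P_U$ large enough (exactly as in the low-density regime), and then, as you observe, the best attainable objective at such $N$ is $\min\bigl(\lambda_w\,\textrm{erf}(x(N,P_{\min})),\,K_{\epsilon}\lambda_{AP}(T-N)\bigr)\log_2(1+\beta)$ in your notation. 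Since $\textrm{erf}(x(N,P_{\min}))>\textrm{erf}(x(N_0,P_{\min}))$, this quantity strictly exceeds $R(N_0,P_{\min})$ whenever $\textrm{erf}(x(N_0,P_{\min}))<K_{\epsilon}\lambda_{AP}(T-N_0-1)/\lambda_w$ and the corresponding $P_s$ does not exceed $P_{\max}$---a parameter configuration the medium-regime condition does not preclude (e.g., small $P_D\eta/P_{\min}$, large $P_{\max}$). So in the second branch of the medium regime (when (\ref{eq: rho_1_v_e}) fails at $(N_0,P_{\min})$), optimality of $(N_0,P_{\min})$ requires either the supplementary condition you anticipate or an Algorithm-1-type sweep over $N>N_0$; neither your proposal nor the paper supplies this, and in its absence the stated solution can be suboptimal. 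Everything else in your proposal matches the paper's argument and is sound.
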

\begin{proof}
 Please refer to Appendix \ref{appendix: proof_theorem}.
\end{proof}

\begin{remark}
For the battery-free case,
due to the lack of energy storage, the amount of available  energy for the UL
phase in each frame
 is strongly affected by the time-varying DL channel fading, and thus may often
be of a small value. As a result, we observe from Theorem  \ref{theorem: P-NB}
that  \emph{to obtain more opportunity to transmit in the UL, the wireless nodes prefer
to set $P_U=P_{\min}$}. Moreover, given the wireless node density $\lambda_w$,  by increasing the AP density $\lambda_{AP}$,
we observe   {\it double} performance improving  effects in the WPCN:
1) in the DL phase, the amount of harvested energy at each wireless node in the DL
phase increases over $\lambda_{AP}$;
2) in the UL phase, due to the largely shortened distance between each wireless node and its associated AP by increasing $\lambda_{AP}$,
  the desired signal strength at the AP is substantially increased, which
dominates over the increased interference effects in the UL.%\footnote{Similar effects of the AP density on improving the wireless node's communication quality  in the UL have also
%been studied in  \cite{Andrews.COM.11}.}.
We thus  find the resulting successful information
transmission probability in the UL phase is increased.
  As a result,  the successful information transmission probability
constraint becomes loose by adopting a large AP density; and
thus from Theorem  \ref{theorem: P-NB}, both the number of   optimal
solutions  and the maximized spatial throughput
are non-decreasing over $\lambda_{AP}$.
\end{remark}

\begin{algorithm}
\caption{Efficient algorithm for optimally solving Problem (P3) in the low AP density regime.}
\begin{algorithmic}[1]
\STATE initialize $N^{*}=0$, $P_U^{*}=0$, and $R(N^{*},P_U^{*})=0$.
\FOR{each $N\in\{1,...,T-1\}$}
\IF{$\lambda_w \textrm{erf}\left(\frac{\Gamma(N+2/\alpha) \lambda_{AP}}{2\Gamma(N)}\sqrt{\frac{\pi^3 P_D\eta}{P_U}} \right) \leq K_{\epsilon}\lambda_{AP} (T-N)$}
\STATE set $P_{s}\!=\!\pi^3 P_D\eta \! \left[\!\frac{2\Gamma(N)}{\lambda_{AP}\Gamma(N\!+\frac{2}{\alpha})}
\textrm{erfinv}\left(\! \frac{K_{\epsilon}\lambda_{AP}(T\!-\!N)}{\lambda_w}\!\right)
\!\right]^{\!-2}$.
\IF{$P_{s}<P_{\min}$}
\STATE set $p=P_{\min}$;
\ELSE
\STATE set $p=P_{s}$;
\ENDIF
\IF{$R(N,p)>R(N^{*},P_U^{*})$}
\STATE set $P_U^{*}=p$, $N^{*}=N$, and $R(N^{*},P_U^{*})=R(N,p)$.
\ENDIF
\ENDIF
\ENDFOR
\STATE return $N^{*}$, $P_U^{*}$, and $R(N^{*},P_U^{*})$.
\end{algorithmic}
\end{algorithm}

\vspace{-0.1in}

\section{Wireless Powered Information Transmission  in   Battery-Deployed Case}
In this section, we consider the  case with battery-deployed wireless nodes, as
shown in Fig.~2 and discussed in Section II-A.
In the following, we first study the ideal scenario with
infinite-capacity battery, i.e., $C=\infty$, to help understand the effects of
deploying a battery for improving the  network performance. Then, we focus on
a more practical scenario  with a finite-capacity battery, i.e., $C<\infty$. One
can imagine that  the network performance of the scenario with $C<\infty$ is
upper bounded by that  with $C=\infty$.
\vspace{-0.03in}

\subsection{Infinite-Capacity Battery Scenario ($C=\infty$)}
In this subsection, we consider the ideal scenario with $C=\infty$, for which
the battery level $S_F$ evolves over frames  according to (\ref{eq:
S_F_Storage_Inf}).
In the following, we  derive the transmission probability $\rho$ and the spatial
throughput $R(N,P_U)$ in this scenario. Then by substituting
$\rho$ and $R(N,P_U)$ into  Problem (P2),
we provide the optimal solutions
$N^{*}$ and $P_U^{*}$ for the infinite-capacity scenario.

First, we study the  transmission probability $\rho$. Unlike the battery-free
case, by deploying batteries to store the harvested
energy over frames, the time-varying channel effects on the available amount
of energy in the UL phase  is largely alleviated in the battery-deployed case.
When $C=\infty$, all the harvested energy $Z_F$ in each frame can
  be stored in the battery and used by the following frames.
  Moreover, note that
the average energy arrival rate  in the DL phase of
each frame  is $\mathbb{E}(Z_F)=\infty$, as explained in Section II-B, which is
much larger than the
required transmit power $P_U<\infty$    in the UL phase. Thus, as the
harvested energy accumulates in the battery over frames, we obtain the following proposition.
\begin{proposition} \label{proposition: rho_storage_Inf}
Given infinite-capacity battery, the typical node's UL transmission probability
is \vspace{-0.07in}
\begin{equation}
 \rho=1. \label{eq: rho_storage_Inf} \vspace{-0.07in}
\end{equation}
\end{proposition}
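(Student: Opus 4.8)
The plan is to show that the stored energy $S_F$ grows without bound over frames, so that the node eventually transmits in every frame and the long-run transmission probability saturates at one. First I would unroll the recursion (\ref{eq: S_F_Storage_Inf}). Writing $S_0=0$ and telescoping $S_F-S_{F-1}=Z_F-P_U I(S_{F-1}\ge P_U)$, I obtain $S_n=\sum_{F=1}^n Z_F-P_U\sum_{k=1}^{n-1} I(S_k\ge P_U)$. Since each indicator is at most $1$ and $S_0=0<P_U$, the consumption term is bounded by $P_U(n-1)$, giving the one-sided bound $S_n\ge \sum_{F=1}^n Z_F-P_U(n-1)$.

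The core of the argument is then a drift/law-of-large-numbers estimate on $\frac{1}{n}\sum_{F=1}^n Z_F$. Under Assumption 1 the $Z_F$ are i.i.d. and nonnegative, and from Proposition \ref{proposition: Z_F_distribution} their common mean is $\mathbb{E}(Z_F)=\infty$. The main technical point is that the ordinary strong law does not apply directly to an infinite-mean summand, so I would instead truncate: for each level $M$ the variables $\min(Z_F,M)$ have finite mean $m_M=\mathbb{E}[\min(Z_1,M)]$, and the standard SLLN gives $\frac{1}{n}\sum_F \min(Z_F,M)\to m_M$ almost surely. Because $Z_F\ge \min(Z_F,M)$, this yields $\liminf_n \frac{1}{n}\sum_F Z_F\ge m_M$ a.s. for every $M$; letting $M\to\infty$ and using $m_M\uparrow\mathbb{E}(Z_1)=\infty$ (monotone convergence) forces $\frac{1}{n}\sum_{F=1}^n Z_F\to\infty$ a.s. Combining with the one-sided bound, $\frac{S_n}{n}\ge \frac{1}{n}\sum_{F=1}^n Z_F-P_U\frac{n-1}{n}\to\infty$, hence $S_F\to\infty$ almost surely.

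Finally I would translate the almost-sure divergence of $S_F$ into the statement $\rho=1$. Since $S_F\to\infty$ a.s., we have $I(S_F\ge P_U)\to 1$ a.s., and as these indicators are bounded by $1$, dominated convergence gives $\mathbb{P}(S_F\ge P_U)\to 1$ as $F\to\infty$. The Ces\`aro average of a sequence converging to $1$ also converges to $1$, so from the definition (\ref{eq: trans_prob_def}) we conclude $\rho=\lim_{n\to\infty}\frac{1}{n}\sum_{F=1}^n\mathbb{P}(S_F\ge P_U)=1$. The only real obstacle is the infinite-mean SLLN step: one must avoid a naive appeal to the classical law of large numbers and instead justify $\frac{1}{n}\sum Z_F\to\infty$ through truncation, everything else being the deterministic telescoping bound and a routine Ces\`aro/dominated-convergence passage. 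I note the conclusion is in fact robust: any strictly positive drift $\mathbb{E}(Z_F)>P_U$ would already suffice, and the infinite mean only makes the divergence more emphatic.
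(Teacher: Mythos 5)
Your proof is correct, but it reaches the conclusion by a genuinely different route than the paper. The two arguments share the same opening move: telescoping the recursion (\ref{eq: S_F_Storage_Inf}) to get the one-sided bound $S_F \geq \sum_{i=1}^{F} Z_i - F P_U$ (your $(n-1)P_U$ is marginally tighter, immaterially so). They then diverge. The paper stays purely distributional: under Assumption 1 the DL point processes of distinct frames are i.i.d.\ PPPs, so by superposition $\sum_{i=1}^{F} Z_i$ is distributed as the energy harvested from a single PPP of density $F\lambda_{AP}$; the closed-form CCDF of Proposition \ref{proposition: Z_F_distribution} then expresses $\mathbb{P}\bigl(\sum_{i=1}^{F} Z_i \geq (F+1)P_U\bigr)$ as an error function whose argument grows on the order of $\sqrt{F}$, hence tends to $1$, and the Ces\`aro average in (\ref{eq: trans_prob_def}) finishes. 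You instead establish the almost-sure statement $S_F \to \infty$: truncation at level $M$, the classical SLLN applied to $\min(Z_F,M)$, and monotone convergence of $\mathbb{E}[\min(Z_1,M)] \uparrow \infty$ give $\frac{1}{n}\sum_{F=1}^{n} Z_F \to \infty$ a.s., after which bounded convergence and the same Ces\`aro step yield $\rho = 1$. Your handling of the infinite-mean issue is the right one --- the naive SLLN indeed does not apply --- and it is precisely the difficulty the paper sidesteps by computing the relevant probability in closed form. The trade-off: your argument is more elementary and strictly more general, since it never uses the PPP structure, so it covers arbitrary path-loss exponents $\alpha$ (the paper's erf CCDF is stated only for $\alpha=4$) and, as you observe, any i.i.d.\ nonnegative arrival process with mean exceeding $P_U$; the paper's argument buys brevity, avoids all almost-sure machinery, and gives a quantitative rate at which $\mathbb{P}(S_F \geq P_U)$ approaches $1$. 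Notably, the authors themselves remark that exploiting the distribution of $Z_F$ makes their proof simpler than the random-walk-theory proof in \cite{Huang.IT.13}; your argument is closer in spirit to that generic approach, but self-contained and sound.
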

\begin{proof}
Please refer to Appendix \ref{appendix: proof_rho_inf_B}.
\end{proof}

From Proposition \ref{proposition: rho_storage_Inf},
in the  infinite-capacity scenario, the APs' RF signals can be considered as a
\emph{reliable} energy source for the wireless nodes.

Next, by substituting (\ref{eq: rho_storage_Inf}) into
(\ref{eq: spatial_throughput_def}), which is the objective function in Problem
(P2), we   obtain the spatial throughput in the infinite-capacity
battery case as $R(N,P_U)=\lambda_w\log_2(1+\beta)$, which is a constant and is
the maximum achievable
spatial throughput of $R(N,P_U)$. Since $R(N,P_U)$ is a constant, the objective
function in
Problem (P2) is also a constant. Thus, the spatial throughput maximization
problem degenerates to a feasibility problem,  given by
\begin{align}
\textrm{(P4)}~~~~~\mathrm{Find}&~~ N, P_U
\nonumber \\
\mathrm{such~ that}& ~~N\!\in \!\left \{\!1,..., \min\!\left(\!T\!-\!1,
T\!-\!\frac{\lambda_w}{K_{\epsilon}\lambda_{AP}}\!\right)\right\}, \label{eq: S_N} \\
& ~~ P_{\min}\leq P_U \leq P_{\max}.
\label{eq: S_P}
\end{align}

By observing the constraints in Problem (P4), the optimal solutions $N^{*}$ and
$P_U^{*}$  of the spatial throughput maximization problem
 are given by any arbitrary point   in the rectangular feasible
region defined by (\ref{eq: S_N}) and (\ref{eq: S_P}).  Unlike the optimal
solutions in the battery-free case,
given in Theorem \ref{theorem: P-NB}, where  $N^{*}$ and
$P_U^{*}$  are correlated, $N^{*}$ and $P_U^{*}$ here can be independently
selected for the   infinite-capacity battery case.
Moreover,  since the transmission probability $\rho=1$, the wireless nodes
can always have opportunity to transmit in the UL with sufficient energy. Thus,
we find any transmit power level
$P_U\in[P_{\min}, P_{\max}]$ is optimal in the infinite-capacity battery
scenario. This is  in  sharp contrast
to the battery-free case, where the optimal transmit power level is the minimum
 transmit power $P_{\min}$ in general. However,  similar to the
battery-free case results shown in Theorem \ref{theorem: P-NB},
since the number of feasible $N$'s is non-decreasing over $\lambda_{AP}$
from (\ref{eq: S_N}),
the number of optimal solution pairs ($N^{*}$ and $P_U^{*}$)
is non-decreasing over $\lambda_{AP}$.
Moreover, we note that if $\lambda_{AP}\geq
\frac{\lambda_w}{K_{\epsilon}}$,  which is the high AP density regime defined
for the battery-free case in Theorem \ref{theorem: P-NB}, any $N \in
\{1,...,T-1\}$ is optimal for the infinite-capacity battery case. This is
because the transmission probability constraint given in Problem (P2) is
always satisfied and  the UL transmission interference is small due to the close
distance  between each wireless node  and its associated AP.
\vspace{-0.05in}

\subsection{Finite-Capacity Battery Scenario ($C<\infty$)}
In this subsection, we consider a practical scenario with finite-capacity
battery, i.e., $C<\infty$, in which the network performance is upper
and lower bounded by that in the infinite-capacity battery scenario and
battery-free case, respectively.
Since the stored energy is capped by $C$, the battery level evolution,
given in (\ref{eq: S_F_Storage}), and thus
 the transmission probability $\rho$, defined in (\ref{eq: trans_prob_def}),
are all  dependent on $C$.
It is hence difficult to find an exact expression of $\rho$ for the
finite-capacity battery scenario \cite{Huang.IT.13}. As a result, we focus on
providing effective bounds to $\rho$.
In the following, we first provide closed-form  lower and
 upper bounds  of $\rho$,   based on
 which,  a special case with $\rho=1$ is obtained.
 Since the tightness of these  closed-form bounds cannot be assured, we then
provide another lower bound, which is relatively tighter to $\rho$ but can only
be obtained numerically. At last,  by applying the obtained bounds of $\rho$, we
study the spatial throughput maximization problem for the finite-capacity
battery scenario.

\subsubsection{Closed-form Bounds of Transmission Probability $\rho$}
By noticing from (\ref{eq: S_F_Storage}) and (\ref{eq: trans_prob_def}), the
transmission probability $ \rho$ increases over the battery capacity $C$.
We thus find  the transmission probability in the finite-capacity battery scenario
is upper and lower bounded by that in the infinite-capacity battery scenario,
given in (\ref{eq: rho_storage_Inf}), and that in the battery-free case,
given in (\ref{eq: rho_No_storage}), respectively.  However, it is noted that both (\ref{eq:
rho_No_storage}) and (\ref{eq: rho_storage_Inf}) are constants and thus cannot
flexibly capture the variation of $\rho$  over different values of  capacity
$C$. It is also noted that \cite{Huang.IT.13} has proposed a
 lower bound, $1-e^{-Q(C-P_U)}$, where $Q$ is the root of $\ln
\mathbb{E}\left[e^{-Q(Z_F-P_U)} \right]$,
  under the condition that $\mathbb{E}(Z_F)>P_U$. Although such a lower
bound exponentially increases over $C$ and can also  be applied in our
considered system as  $\mathbb{E}(Z_F)=\infty$,   it  may not be tight when $C$
is small.
For example, when $C=P_U$, the lower
bound $1-e^{-Q(C-P_U)}$  provided in \cite{Huang.IT.13} is $0$, which is even
smaller than the
lower bound   given in (\ref{eq: rho_No_storage}).
As a result, we combine both lower bounds given in  (\ref{eq: rho_No_storage})
and \cite{Huang.IT.13} to provide a tighter lower bound in the  following
proposition.
 \begin{proposition} \label{proposition: rho_bound_closedForm}
For the finite-capacity battery case, the transmission probability
$\rho$ satisfies $\mathcal{L}\leq \rho \leq 1$, where
$\mathcal{L}\!=\!\max\left(\textrm{erf}\left(\!\frac{\Gamma(N\!+\!2/\alpha) \lambda_{AP}}{2\Gamma(N)}\sqrt{\frac{\pi^3 P_D\eta}{P_U}}\! \right), 1\!-\!e^{-Q(C-P_U)} \right)$, with
$Q=P_D \eta \left[\frac{\pi \lambda_{AP}\Gamma(N+2/\alpha)\Gamma\left(1-2/\alpha \right)}{P_U\Gamma(N)}\right]^2$.
\end{proposition}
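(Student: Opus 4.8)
The plan is to read the claimed sandwich $\mathcal{L}\le\rho\le 1$ as three independent bounds and then combine them: the upper bound $\rho\le 1$, a first lower bound by the error-function term, and a second lower bound by $1-e^{-Q(C-P_U)}$. Since $\mathcal{L}$ is by definition the \emph{maximum} of the latter two, establishing each lower bound separately suffices. The upper bound is immediate: $\rho$ in (\ref{eq: trans_prob_def}) is a Ces\`aro limit of probabilities $\mathbb{P}(S_F\ge P_U)\le 1$, so $\rho\le 1$; equivalently, since $S_F$ defined by (\ref{eq: S_F_Storage}) is non-decreasing in $C$, the finite-capacity value is dominated by the infinite-capacity value $\rho=1$ of Proposition \ref{proposition: rho_storage_Inf}.

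For the first lower bound I would argue pathwise on the common probability space carrying $\{Z_F\}$. Writing the carried-over energy $W_F:=S_{F-1}-P_U I(S_{F-1}\ge P_U)\ge 0$, the argument of the $\min$ in (\ref{eq: S_F_Storage}) equals $W_F+Z_F\ge Z_F$; hence whenever $Z_F\ge P_U$ we have both $W_F+Z_F\ge P_U$ and $C\ge P_U$, so $S_F=\min(W_F+Z_F,C)\ge P_U$. This yields the frame-wise inclusion $\{Z_F\ge P_U\}\subseteq\{S_F\ge P_U\}$ for every $F$, hence $\mathbb{P}(S_F\ge P_U)\ge\mathbb{P}(Z_F\ge P_U)$. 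Averaging over frames and using that the $Z_F$ are identically distributed (Assumption 1) together with (\ref{eq: rho_No_storage}) gives $\rho\ge\textrm{erf}(\cdots)$, the first argument of $\mathcal{L}$; this is precisely the assertion that the battery-free case lower-bounds every finite-$C$ case.

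For the second lower bound I would invoke the exponential (Lyapunov / Cram\'er-root) estimate of \cite{Huang.IT.13}: provided $\mathbb{E}(Z_F)>P_U$, one has $\rho\ge 1-e^{-Q(C-P_U)}$ with $Q>0$ the positive root of $\ln\mathbb{E}[e^{-Q(Z_F-P_U)}]=0$. The hypothesis holds trivially since $\mathbb{E}(Z_F)=\infty$. It then remains to confirm $Q$ exists uniquely and to evaluate it. Setting $\phi(q):=\ln\mathbb{E}[e^{-q(Z_F-P_U)}]=qP_U+\ln\mathcal{L}_{Z_F}(q)$ and inserting the Laplace transform of Proposition \ref{proposition: Z_F_distribution} gives $\phi(q)=qP_U-Aq^{2/\alpha}$ with $A=\pi\lambda_{AP}\frac{\Gamma(N+2/\alpha)}{\Gamma(N)}\Gamma(1-2/\alpha)(P_D\eta)^{2/\alpha}$. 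Because $2/\alpha<1$, one checks $\phi(0)=0$, $\phi'(0^+)=-\infty$, $\phi(q)\to+\infty$, and $\phi''>0$, so $\phi$ is strictly convex with a unique positive root; solving $QP_U=AQ^{2/\alpha}$ gives $Q=(A/P_U)^{\alpha/(\alpha-2)}$, which reduces to the stated $Q$ when $\alpha=4$.

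The main obstacle is the second lower bound. Unlike the purely pathwise first bound, it relies on the steady-state decay estimate for the reflected battery recursion (\ref{eq: S_F_Storage}) established in \cite{Huang.IT.13}, so the real work is in verifying that its hypotheses carry over to our i.i.d.\ $Z_F$ model---in particular that $Q$ is indeed the correct decay exponent even though $\mathbb{E}(Z_F)=\infty$, and that modelling the capped battery level as a reflected random walk is legitimate here. Once that transfer is justified, the explicit evaluation of $Q$ from Proposition \ref{proposition: Z_F_distribution} is routine, and the conclusion $\rho\ge\max\!\big(\textrm{erf}(\cdots),\,1-e^{-Q(C-P_U)}\big)=\mathcal{L}$ follows.
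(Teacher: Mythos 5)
Your proposal is correct and takes essentially the same approach as the paper: both lower bounds (the battery-free bound $\textrm{erf}(\cdot)$ from (\ref{eq: rho_No_storage}) and the exponential bound $1-e^{-Q(C-P_U)}$ from \cite{Huang.IT.13}) are combined via the maximum, the upper bound $\rho\leq 1$ is immediate, and the only real computation is solving $\ln\mathbb{E}\left[e^{-Q(Z_F-P_U)}\right]=0$ using the Laplace transform of Proposition \ref{proposition: Z_F_distribution}. Your pathwise inclusion $\{Z_F\geq P_U\}\subseteq\{S_F\geq P_U\}$ and the convexity analysis guaranteeing a unique positive root $Q$ simply make rigorous the steps the paper dismisses as ``easy to find.''
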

\begin{proof}
 Please refer to Appendix \ref{appendix: proof_rho_finite_B}.
\end{proof}

It is noted that  when
$\lambda_{AP}\geq \frac{2v_e \Gamma(N)}{\Gamma(N+2/\alpha)}\sqrt{\frac{P_U}{\pi^3 P_D \eta}}$,
the lower bound given in Proposition \ref{proposition: rho_bound_closedForm} equals $1$, and thus we
obtain the following corollary.
\begin{corollary} \label{corollary: rho=1}
For the   finite-capacity battery case, if $\lambda_{AP}\geq \frac{2v_e \Gamma(N)}{\Gamma(N+2/\alpha)}\sqrt{\frac{P_U}{\pi^3 P_D \eta}}$, $\rho=1$.
\end{corollary}

Although the  lower and upper bounds provided in Proposition \ref{proposition: rho_bound_closedForm}
are in closed-form, their tightness to the actual $\rho$ of the finite-capacity
case cannot be assured for arbitrary $C$ and other parameters. Thus, in the
following, we provide an alternative   lower bound to $\rho$ which is tight in
general.

\subsubsection{Tight Lower Bound of Transmission Probability $\rho$}
\label{section: tight_lower_bound}
The tight lower bound of $\rho$ is obtained by modeling
the battery level as a discrete-time Markov chain \cite{Huang.IT.13}.
In the following, we  first
quantize  $C$, $Z_F$, and $P_U$, and then based on the resulting
battery level, we develop the  discrete-time Markov chain with finite number of
states. By finding the
steady-state probabilities of the Markov chain,  we
novelly derive a tight lower bound to the transmission probability $\rho$,
which is not in closed-form but can be computed efficiently.

First, we quantize the battery capacity $C$, the harvested energy $Z_F$,
and the required transmit power $P_U$ of the typical wireless node,
such that the battery level only has a finite number of values. Specifically,
let $\delta\ll C$
represent the quantization step size, which assures   $\lceil
P_U/\delta \rceil\leq \lfloor C/\delta \rfloor$, with $\lceil x
\rceil$ and $\lfloor x \rfloor$ denoting  ceiling and floor operations of
$x\in\mathbb{R}$, respectively. We reduce $C$ and
  $Z_F$ to $\delta \lfloor C/\delta \rfloor$  and   $ \delta
\lfloor Z_F/\delta \rfloor$, respectively, and increase   $P_U$ to $ \delta
\lceil P_U/\delta \rceil$. Clearly, under these operations, the resulting
battery level is a lower bound to  $S_F$ in (\ref{eq: S_F_Storage}),
which is denoted by $S_F^{LB}$,  given as \vspace{-0.05in}
\begin{align}
\!\!\!S_F^{LB}=&\min\Big(\delta \lfloor C/\delta \rfloor,\nonumber\\
&\!\!\!\!\!\!\!\!\!S_{F-1}^{LB} -\delta \lceil P_U/\delta
\rceil I\left(S_{F-1}^{LB}\!\geq\! \delta \lceil P_U/\delta
\rceil \right)\!+\!\delta \lfloor Z_F/\delta
\rfloor\Big)  \label{eq: S_F_Storage_low}
\end{align}
with  initial $S_0^{LB}=0$. For any $F\geq 0$, we have $S_F^{LB}\in \{0,
\delta,...,\delta\lfloor C/\delta \rfloor \}$. By replacing $S_F$ with
$S_F^{LB}$ in (\ref{eq: trans_prob_def}), we  obtain a lower bound to
$\rho$, which is denoted by $\rho^{LB}$. It is easy
to verify that when $\delta$ is
sufficiently small, $\rho^{LB}$ is a tight lower bound to $\rho$, which is
expected to outperform the bounds in Proposition \ref{proposition: rho_bound_closedForm}. Moreover,
when
$\delta\rightarrow 0$,  we have $\rho^{LB}=\rho$ due to $S_F^{LB}=S_F$.

Next, we derive $\rho^{LB}$ by
analyzing the distribution of $S_F^{LB}$ via Markov-chain theory. Let
$U=\lceil P_U/\delta \rceil$ and $V=\lfloor C/\delta \rfloor$. From
(\ref{eq: S_F_Storage_low}), given $S_{F-1}^{LB}$ with $F\geq 2$, $S_F^{LB}$
is independent
of $\{S_n^{LB}\}_{t=0}^{F-2}$. Thus,  $\{S_F^{LB}\}$  satisfies the Markov
property and is hence a discrete-time Markov chain, with the state space given
by $\{0, \delta,..., V \delta\}$.
Let $P_{ij}=\mathbb{P}\left(S_F^{LB}=j\delta \big| S_{F-1}^{LB}=i\delta
\right)$ represent the transition probability from state
$i\delta$ to $j \delta$, with $i, j\in \{0, ...,V\}$.
If $j<V$, the battery level $j\delta$ is below the capacity limit $V
\delta$, and thus \vspace{-0.05in}
\begin{align}
P_{ij}&=\mathbb{P}\big(\delta  \lfloor Z_F/\delta
\rfloor=(j-i)\delta+U \delta I(i\geq U)\big)  \nonumber \\
&=\mathbb{P}\big((j-i)\delta+U \delta I(i\geq U) \leq Z_F<
\nonumber \\
&~~~~~~~~~~~~~~~~~~~~~~~(j-i+1)\delta +U \delta I(i\geq U) \big).
\label{eq: P_ji_j}
\end{align}
If $j=V$, state transition from $i$ to $j$ includes all events that can cause
battery saturation, and thus \vspace{-0.05in}
\begin{align}
P_{ij}&=\sum_{k=V-i}^{\infty}\mathbb{P}\big(\delta  \lfloor Z_F/\delta
\rfloor=k \delta+U \delta I(i\geq U)\big)  \nonumber \\
&=\mathbb{P}\big( Z_F \geq
k \delta+U \delta I(i\geq U) \big). \label{eq: P_ji_v0}
\end{align}
By combining (\ref{eq: P_ji_j}) and (\ref{eq: P_ji_v0}), we obtain
\begin{equation}
 P_{ij} \!\!=\!\!\left\{\!\!\!
   \begin{array}{l}
     \mathbb{P}\big((j\!-\!i)\delta \!\leq \!Z_F\!<\! (j\!+\!1\!-\!i)\delta\big), \\
     ~~~~~~~~~~~~~~~~~~~~~~~~~~~~~~~~\textrm{if}~j\!<\!V, i\!<\!U,\\
     \mathbb{P}\big((j\!-\!i)\delta \!+\!U\delta\! \leq \!Z_F \!<\!
(j\!+\!1\!-\!i)\delta\!+\!U\delta\big),\\
    ~~~~~~~~~~~~~~~~~~~~~~~~~~~~~~~~\textrm{if}~j\!<\!V, i\!\geq \!U,\\
    \mathbb{P}\big(Z_F\!\geq \!(V\!-\!i)\delta\big),  ~~~~~~~~~~~\!\!\textrm{if}~j\!=\!V, i\!<\!U,\\
    \mathbb{P}\big(Z_F\!\geq\! (V\!-\!i)\delta\!+\!U\delta\big), ~~~~~\!\!\textrm{if}~j\!=\!V, i\!\geq\! U,
   \end{array} \label{eq: state_trans_Prob}
  \right.
\end{equation}
where in each case,  $P_{ij}$ is only determined by the distribution of $Z_F$.
Denote $\boldsymbol{\pi}=[\pi_0,...\pi_{V}]$ as the
steady-state probabilities of the Markov chain, and  $\boldsymbol{P}$ as
the state transition probability matrix with the $(i,j)$-th element given by
$P_{ij}$. By jointly solving $\boldsymbol{\pi P}=\boldsymbol{\pi}$ and
$\sum_{i=0}^{V}\pi_i=1$, or applying $\boldsymbol{\pi 'P}^k=\boldsymbol{\pi}$
with a randomly initialized state probabilities $\boldsymbol{\pi'}=[\pi_0',...\pi_{V}']$
and $k\in\mathbb{Z}$,
we can find the value of $\pi_i$, $\forall
i\in[0,...,V]$, and thus obtain $\rho^{LB}=\sum_{i=0}^{U}\pi_i$.
Since there is no general expression to each $\pi_i$, $\rho^{LB}$ can only be
obtained numerically in general.  In Algorithm 2, by reducing $\delta$ to
repeatedly calculate $\rho^{LB}$ until an absolute error bound, denoted by
$\theta\ll 1$ is satisfied, we present a simple procedure to calculate
$\rho^{LB}$, which ensures  $|\rho-\rho^{LB}|\leq \theta$.

\begin{algorithm}
\caption{Markov-chain based search algorithm to find a tight lower bound to
$\rho$}
\begin{algorithmic}[1]
\STATE initialize $\delta$ and $\theta$, and set $\rho_0=1$ and $\rho^{LB}=0$.
\WHILE{$|\rho_0-\rho^{LB}|>\theta$}
\STATE set $\rho_0=\rho^{LB}$.
\STATE set $\delta=\delta/2$.
\STATE calculate $U$, $V$, and $\boldsymbol{P}$.
\STATE find $\boldsymbol{\pi}$, such that $\boldsymbol{\pi}=\boldsymbol{\pi P}$,
and set $\rho^{LB}=\sum_{i=0}^{U}\pi_i$.
\ENDWHILE
\STATE return $\rho^{LB}$.
\end{algorithmic}
\end{algorithm}
%\vspace{-0.3in}
\begin{remark}
The computational complexity of Algorithm 2 is determined by the values of
$\delta$ and $\theta$ as well as the efficiency to find the steady-state
probabilities of $\boldsymbol{\pi}$. As a result, it is generally difficult to
find the complexity order of Algorithm 2 analytically, as   in
\cite{Weber.Alg}.
Intuitively,  when $\delta$ is very small,  due to the
resulting large size of the state transition probability matrix
$\boldsymbol{P}$, Algorithm 2
may not be computationally efficient. However, it is worth noting that Algorithm
2 is essentially an {\it off-line} algorithm. Moreover,
since it is not only difficult to find an exact expression of $\rho$,  but also
computationally
prohibitive  to obtain $\rho$ by network-level simulation,   the tight lower
bound $\rho^{LB}$ provided by Algorithm 2 is   important for analytical
 study of  the actual transmission probability and thus the spatial
throughput. For example, as will be shown later,
$\rho^{LB}$ can help  evaluate
 the performance of other lower or upper bounds, and maximize the spatial
throughput for a finite-capacity battery
case with any designed parameters.
\end{remark}

\vspace{0.02in}
\subsubsection{Spatial Throughput Maximization} \label{section: STM_finite}
We consider two cases, which are $\lambda_{AP}\!\geq\! \frac{2v_e \Gamma(N)}{\Gamma(N\!+\!2/\alpha)}\sqrt{\frac{P_U}{\pi^3 P_D \eta}}$
and $\lambda_{AP}\!< \!\frac{2v_e \Gamma(N)}{\Gamma(N\!+\!2/\alpha)}\sqrt{\frac{P_U}{\pi^3 P_D \eta}}$,
respectively, for spatial throughput
maximization. We have $\rho=1$ from Corollary \ref{corollary: rho=1} in
the former case, but no exact expression of $\rho$ in the latter case.

First, we consider the case with $\lambda_{AP}\!\geq \!\frac{2v_e \Gamma(N)}{\Gamma(N+2/\alpha)}\sqrt{\frac{P_U}{\pi^3 P_D \eta}}$.
Similar to the infinite-capacity battery case, when $\rho\!=\!1$, the spatial
throughput
$R(N,P_U)$ becomes a constant. Thus, by substituting $\rho\!=\!1$ into Problem
(P2), and adding the constraint
$\lambda_{AP}\!\geq \!\frac{2v_e \Gamma(N)}{\Gamma(N+2/\alpha)}\sqrt{\frac{P_U}{\pi^3 P_D \eta}}$, or
equivalently,
$P_U\!\leq \!\pi^3 P_D \eta \left[\frac{ \lambda_{AP} \Gamma(N+2/\alpha)}{2 \Gamma(N) v_e} \right]^2$,  the
spatial throughput maximization problem degenerates to a  feasibility
problem, given by
\begin{align}
%\vspace{-0.5in}
\textrm{(P5)}~~~\mathrm{Find}&~N, P_U
\nonumber \\
\mathrm{such~ that}& ~N\!\in \!\left\{\!1,..., \min\!\left(\!T\!-\!1,
T\!-\!\frac{\lambda_w}{K_{\epsilon}\lambda_{AP}}\! \right)\!\right\},  \label{eq:
N_finite} \\
& ~P_{\min}\leq P_U \leq \min
\Big(\!P_{\max},  \nonumber \\
&~~~~~~~~~~~~~~\pi^3 P_D \eta \left[\frac{ \lambda_{AP} \Gamma(N\!+\!2/\alpha)}{2 \Gamma(N) v_e} \right]^2 \!\Big). \label{eq:
P_finite}
\end{align}

It is observed that the optimal solutions $N^{*}$ and $P_U^{*}$  to Problem (P5)
are arbitrary values that locate in the feasible region,
defined by (\ref{eq: N_finite}) and (\ref{eq: P_finite}).
It is also observed that due to the reduced battery capacity, the feasible
region of Problem (P5) for the finite-capacity battery case is reduced, as
compared to that of Problem (P4) for the infinite-capacity battery case.
Unlike Problem (P4), where $N^{*}$ and $P_U^{*}$ can be independently selected
in its feasible region, $N^{*}$ and $P_U^{*}$ of Problem (P5) may be correlated,
due to the added constraint $P_U\!\leq \! \pi^3 P_D \eta \left[\frac{ \lambda_{AP} \Gamma(N+2/\alpha)}{2 \Gamma(N) v_e} \right]^2$
to ensure $\rho \!= \!1$. Similar to
both  battery-free and infinite-capacity battery cases, due to the non-decreased feasible region, we also  find that
the number of optimal solutions is non-decreasing over the AP density $\lambda_{AP}$.

Next, we focus on the case with $\lambda_{AP}\!< \!\frac{2v_e \Gamma(N)}{\Gamma(N+2/\alpha)}\sqrt{\frac{P_U}{\pi^3 P_D \eta}}$. Due to the lack
of exact expression of $\rho$ and thus $R(N,P_U)$, given in  (\ref{eq:
spatial_throughput_def}),  we exploit
Algorithm 2 to study the spatial throughput maximization problem, as defined by
Problem (P2). Specifically, for any $P_U\!\in\![P_{\min}, P_{\max}]$ and
$N\!\in\!\{1,...,T\!-\!1\}$, we first apply Algorithm 2 to find a tight lower bound
$\rho^{LB}$ to $\rho$. Then based on $\rho^{LB}$, if the transmission
probability constraint
$\lambda_w\rho^{LB}\!\leq\! K_{\epsilon} \lambda_{AP} (T\!-\!N)$   is
satisfied,   we can obtain a non-zero tight lower bound of  the spatial
throughput
$R(N,P_U)$, which is denoted by $R^{LB}(N,P_U)$; otherwise, we set
$R^{LB}(N,P_U)\!=\!0$. After finding all $R^{LB}(N,P_U)$'s over $P_U\!\in\![P_{\min},
P_{\max}]$ and $N\!\in\!\{1,...,T\!-\!1\}$, we can easily find the  optimal solutions
$N^{*}$ and $P_U^{*}$ that maximizes $R^{LB}(N^{*},P_U^{*})$.
%Note that since $R^{LB}(N,P_U)$ is determined by $\rho^{LB}$
From  (\ref{eq: spatial_throughput_def}), if $\lim_{\delta\rightarrow 0} \rho^{LB}\!=\!\rho$ by
adopting a sufficiently small $\theta$ in Algorithm 2,
we  have  $\lim_{\delta\rightarrow 0} R^{LB}(N,P_U)\!=\!R(N,P_U)$, over
any $P_U\!\in\![P_{\min}, P_{\max}]$ and $N\!\in\!\{1,...,T\!-\!1\}$.
Therefore, the obtained
$N^{*}$ and $P_U^{*}$ can be seen as tight approximations to
the actual optimal DL slots and UL transmit power, respectively.
A numerical example is provided in Section VI-B to find the maximized spatial
throughput based on Algorithm~2.
%\vspace{-0.01in}

\section{Numerical Results}
Numerical results are provided in this section. In the following, we first  validate the analytical results, and then further study  the transmission probability and spatial throughput for both battery-free and battery-deployed cases.

\subsection{Validation of the Analytical Results}
This subsection  validates  the analytical results obtained in Section II and Section III by simulation. We  validate the feasibility of Assumption 1 for independent $Z_F$'s, and the homogeneous PPP assumption for  the point process formed by the active wireless nodes in the UL slot. We also find that the distribution of   $Z_F$  in Proposition \ref{proposition: Z_F_distribution} and    $P_{suc}$  in Proposition \ref{proposition: P_suc}  can be similarly validated by using the methods in the existing literature (e.g., \cite{Haenggi.book} and \cite{Andrews.COM.11}).    We focus on the battery-free case in Type-I network model, and find similar validation  results   for  the battery-deployed case in Type-I network model as well as  both cases in  Type-II network model.
Specifically, at the beginning of each frame, we generate $\Phi(\lambda_{AP})$ for APs and $\Phi(\lambda_w)$ for wireless nodes  in  a square of
$[0\textrm{m},1000\textrm{m}]\!\times\![0\textrm{m},1000\textrm{m}]$, according to the method described in \cite{Stoyan.SG.95}. Then at the beginning of each slot within a frame, we  independently and uniformly relocate all the wireless nodes in the considered area. To take care of the border effects, we focus on sampling the wireless nodes that locate in the interim square with side length $L$m, $0<L<1000$.  Unless otherwise specified, in this subsection, we set $\eta=0.4$, $\lambda_w=0.005/m^2$, $P_D=10$W, $T=3$, and $N=2$. All  simulation results   are obtained based on an average over 4000 frame realizations.

\begin{figure}
\setlength{\abovecaptionskip}{-0.1in}
\centering
\DeclareGraphicsExtensions{.eps,.mps,.pdf,.jpg,.png}
\DeclareGraphicsRule{*}{eps}{*}{}
\includegraphics[angle=0, width=0.47\textwidth]{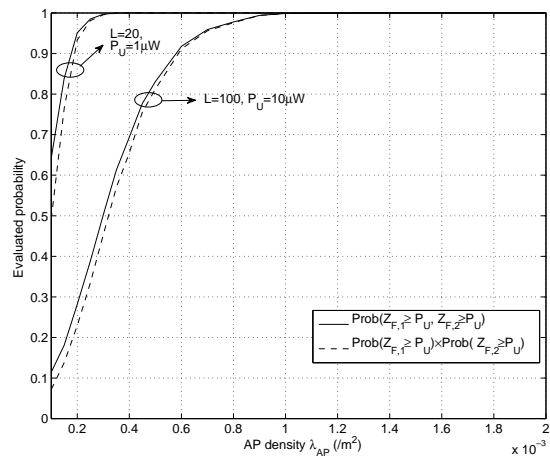}
\caption{Validation of the independent assumption for $Z_F$'s.}
\label{fig: Z_F}
\vspace{-0.15in}
\end{figure}
First, we validate the feasibility of Assumption 1.  Since  the correlations between $Z_F$'s are similar over time frames and space $\mathbb{R}^2$, we focus on validating that $Z_F$'s can be viewed independent over space $\mathbb{R}^2$.
Specifically, we randomly select two wireless nodes and index them with $i=1$ and $i=2$, respectively. The two wireless nodes  independently and uniformly change their locations over frames in the interim square with length $L$m. We consider two scenarios, where in the first scenario, we set $L=20$m  and   $P_U=1\mu$W, and in the second scenario, we set $L=100$m and $P_U=10\mu$W. Clearly, both wireless nodes are of smaller mobility in the former scenario and larger mobility in the latter one. Moreover, since $L<\infty$,  the two wireless nodes are of limited mobility over frames in both scenarios.
In Fig.~\ref{fig: Z_F}, we evaluate and compare the marginal probability product $\mathbb{P}(Z_{F,1}\geq P_U) \times \mathbb{P}(Z_{F,2}\geq P_U)$ with the joint probability $\mathbb{P}(Z_{F,1}\geq P_U,Z_{F,2}\geq P_U)$ over the AP density $\lambda_{AP}$ in both scenarios.
It is observed from  Fig.~\ref{fig: Z_F} that in both scenarios, for any AP density, the gap between the  marginal probability product and the joint probability is tightly approaching  zero; and especially when AP density is reasonably large, such gap decreases to be zero. Hence,  the  harvested energy of these two wireless nodes in one frame is tightly approaching to be independent, and thus can be viewed as independent.  Moreover, by comparing the two scenarios, it is observed that when the wireless nodes are of smaller mobility in the first scenario, the gap between  marginal probability product and the joint probability is comparatively large when $\lambda_{AP}$ is quite small  (e.g., $\lambda_{AP}=0.0001/\textrm{m}^2$). This is mainly because when $\lambda_{AP}$ is quite small, the dominate APs are more correlated for wireless nodes with smaller mobility, as compared to that with larger mobility. However, the resulted correlation  is  rapidly reduced as $\lambda_{AP}$ is
reasonably increased.  Therefore, Assumption 1 can be well applied in the considered WPCN.

Next, we validate the Poisson assumption for the point process formed by the active wireless nodes in the UL slot. According to \cite{Stoyan.SG.95}, a  point process on $\mathbb{R}^2$ is \emph{fully} characterized by its void probability on an arbitrary compact subset of $\mathbb{R}^2$. We  evaluate and compare  the void probability of the  actual point process in the UL slot with that of the assumed PPP $\Phi(\lambda_a)$ in the interim square with side length $L$,  by setting   $L=1:1:20$, in  Fig. \ref{fig: void}.    From  \cite{Stoyan.SG.95}, given $L$, the void probability of $\Phi(\lambda_a)$ in the  interim square   is given by $\exp\left(-\lambda_a L^2\right)$.  We set $\lambda_{AP}=0.0005/\textrm{m}^2$ and $P_U=10\mu$W.
 It is observed  from Fig. \ref{fig: void} that the void probabilities of both the assumed PPP and the actual point process in the UL decrease  over the increased interim area with side length $L$, as expected.  Moreover,  since Assumption 1 can be well applied, as its direct result to obtain the PPP $\Phi(\lambda_{a})$ in the UL, it is observed that for any $L$, the void probability of the assumed PPP $\Phi(\lambda_{a})$  is tightly close to that of the actual point process in the UL,  which validates the PPP assumption for the point process in the UL. In addition, from (\ref{eq: trans_prob_def}) and (\ref{eq: lambda_a}), since the density $\lambda_a$  is determined by the distribution of $Z_F$, the successful validation of the assumed   PPP $\Phi(\lambda_a)$    also implies the correctness of the derived $Z_F$'s distribution in Proposition 2.1 under Assumption~1.

\begin{figure}
\setlength{\abovecaptionskip}{-0.1in}
\centering
\DeclareGraphicsExtensions{.eps,.mps,.pdf,.jpg,.png}
\DeclareGraphicsRule{*}{eps}{*}{}
\includegraphics[angle=0, width=0.47\textwidth]{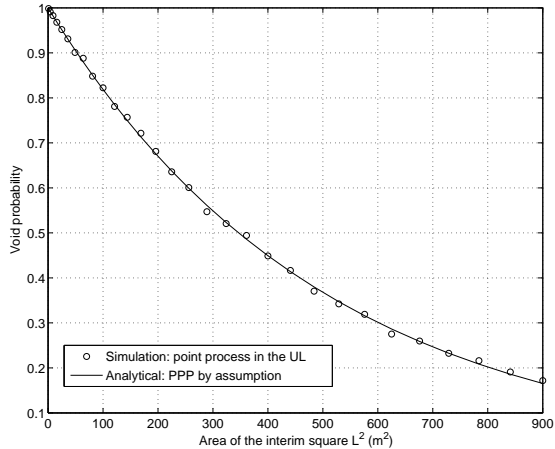}
\caption{Validation of the PPP assumption  for the point process in the UL.}
\label{fig: void}
\vspace{-0.147in}
\end{figure}

\begin{figure}
\setlength{\abovecaptionskip}{-0.1in}
\centering
\DeclareGraphicsExtensions{.eps,.mps,.pdf,.jpg,.png}
\DeclareGraphicsRule{*}{eps}{*}{}
\includegraphics[angle=0, width=0.47\textwidth]{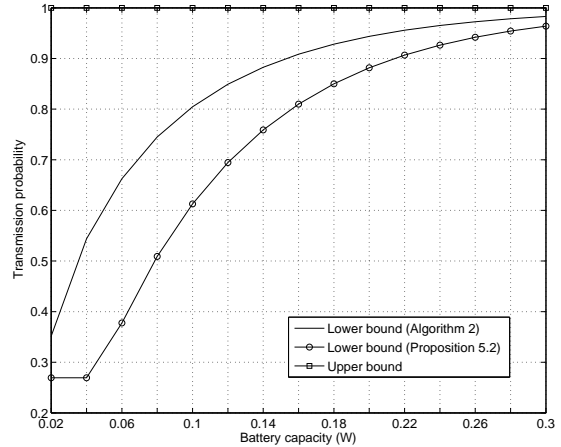}
\caption{Transmission probability over battery capacity.}
\label{fig: batter_finite_rho}
\vspace{-0.155in}
\end{figure}

\subsection{Study on Transmission Probability  and Spatial Throughput}
This subsections studies the transmission probability and the spatial throughput.
Unless otherwise specified, in this subsection, we reasonably set $P_D=10$W,
$\sigma^2=-60$dBm, $\epsilon=0.05$, $\eta=0.4$, $T=100$, and $\beta=5$. Moreover,
we set $n=9$ for calculating $\textrm{erf}(x)$ in (\ref{eq: erf}), where $v_e$
is obtained as $v_e=\textrm{erfinv}(10^{-9})$. Similarly, $g_0$ is  obtained by
numerically solving    $g_0 Q\left(\frac{g_0}{2 \pi}\right) =(1-\epsilon)\exp
\left(-\frac{g_0^2}{4 \pi}\right)$ as given in Proposition \ref{proposition:
P_suc_constraint_Equivalent}.
We also observe  that similar performance can be obtained by using other parameters.
\subsubsection{Transmission Probability $\rho$}
Since the transmission probabilities for the battery-free and infinite-capacity
battery cases
are obtained exactly, as given in (\ref{eq: rho_No_storage}) and (\ref{eq:
rho_storage_Inf}), respectively,
we focus on the transmission probability for the
finite-capacity battery case. We set $\lambda_w=0.0012/m^2$,
$\lambda_{AP}=0.0008/m^2$, $N=60$, and $P_U=0.02$W.

Fig.~\ref{fig: batter_finite_rho}  compares   the closed-form  lower and upper
bounds of $\rho$, given in Proposition \ref{proposition: rho_bound_closedForm},
and the tight lower bound, given by Algorithm 2, over the battery capacity $C$.
By adopting Algorithm 2, we set the absolute error $\theta=0.001$, and
initialize $\delta=0.0001$.
First, it is observed from Fig.~\ref{fig: batter_finite_rho} that the tight
lower bound by Algorithm~2 monotonically increases over battery capacity $C$ as
expected; and as the actual transmission probability, it is  bounded by the
upper and lower bounds provided in Proposition \ref{proposition:
rho_bound_closedForm}, respectively.
Next, for the closed-form lower bound by Proposition
\ref{proposition: rho_bound_closedForm},  it is observed
when the capacity is small with $0.2\leq C \leq 0.4$,   a constant lower
bound is obtained as $\textrm{erf}\left(\frac{\Gamma(N+2/\alpha) \lambda_{AP}}{2\Gamma(N)}\sqrt{\frac{\pi^3 P_D\eta}{P_U}} \right) \geq 1-e^{-Q(C-P_U)}$; and when $C>0.4$,
the lower bound is given by $ 1-e^{-Q(C-P_U)} \geq \textrm{erf}\left(\frac{\Gamma(N+2/\alpha) \lambda_{AP}}{2\Gamma(N)}\sqrt{\frac{\pi^3 P_D\eta}{P_U}} \right)$, which generally
captures the variation of the transmission probability,
by taking the tight lower bound by Algorithm 2 as a reference.
Moreover, as $C$ increases, we observe both lower bounds by Algorithm
2 and Proposition \ref{proposition: rho_bound_closedForm} approach to the upper
bound $\rho=1$, and that by Algorithm 2 becomes tight to $\rho=1$ when $C$ is
large.
Furthermore, noticing that $\textrm{erf}\left(\frac{\Gamma(N+2/\alpha) \lambda_{AP}}{2\Gamma(N)}\sqrt{\frac{\pi^3 P_D\eta}{P_U}} \right)$ is the transmission probability in the
battery-free case, given in (\ref{eq: rho_No_storage}), we observe  that it is
always lower than the tight lower bound by Algorithm 2 in the battery-deployed
case as expected.

\subsubsection{Spatial Throughput}
We study the spatial throughput  in both
 battery-free and battery-deployed cases.
In the battery-free case, by applying Theorem \ref{theorem: P-NB}, we focus on
showing the effects of the AP density $\lambda_{AP}$ and wireless node density
$\lambda_w$ on the maximized spatial throughput. In the battery-deployed case,
we focus on the challenging finite-capacity battery case with  $\lambda_{AP}< \frac{2v_e \Gamma(N)}{\Gamma(N+2/\alpha)}\sqrt{\frac{P_U}{\pi^3 P_D \eta}}$, and exploit Algorithm 2  to help find
the maximized spatial throughput.

Fig.~\ref{fig: batter_free_optimal_R} shows the maximized spatial throughput
over the AP density in the battery-free case, by applying Theorem  \ref{theorem:
P-NB}. We consider two scenarios, with wireless node density $\lambda_w= 0.0012/m^2$ and $\lambda_w=
0.002/m^2$, respectively, where for each scenario, the   low, medium and high AP
regimes are given by $\big[0, \lambda_w/(K_{\epsilon}(T-1)\big)$,
$\big[\lambda_w/(K_{\epsilon}(T-1)), \lambda_w/K_{\epsilon}\big)$, and
$\big[\lambda_w/K_{\epsilon}, \infty \big)$, respectively.
  First, for both scenarios,
it is observed  that  by
increasing $\lambda_{AP}$, the maximized spatial throughput slowly increases  in
the low AP density regime, and after
$\lambda_{AP}=\lambda_w/(K_{\epsilon}(T-1))$, it rapidly increases in the medium
AP density regime and achieves its maximum achievable spatial throughput
$\lambda_w \log_2(1+\beta)$ at some point in this regime; and after this point,
it remains as the constant $\lambda_w \log_2(1+\beta)$ over all the medium and
high AP density regimes. Since in   both scenarios, we observe that
  $\lambda_w \log_2(1+\beta)$ is achieved far before
$\lambda_{AP}$ reaches to its high density regime, for ease of illustration, we only
show the low AP density regime and part of the medium AP density regime in Fig.~ \ref{fig: batter_free_optimal_R}
for both scenarios.
  Next,   it is observed that the maximum achievable spatial
throughput $\lambda_w \log_2(1+\beta)$ is larger for the scenario with a
larger $\lambda_w= 0.002/m^2$, as compared to the scenario with $\lambda_w= 0.0012/m^2$.
Moreover, in the scenario with a larger $\lambda_w=
0.002/m^2$,  due to the increased interference level, to achieve
$\lambda_w \log_2(1+\beta)$ under the successful information transmission
probability constraint,   more APs are needed to be deployed  to  reduce the
distance between the wireless nodes and their associated APs, so as to improve the
desired signal strength  and thus the successful information transmission
probability.
\begin{figure}
\setlength{\abovecaptionskip}{-0.1in}
\centering
\DeclareGraphicsExtensions{.eps,.mps,.pdf,.jpg,.png}
\DeclareGraphicsRule{*}{eps}{*}{}
\includegraphics[angle=0,
width=0.47\textwidth]{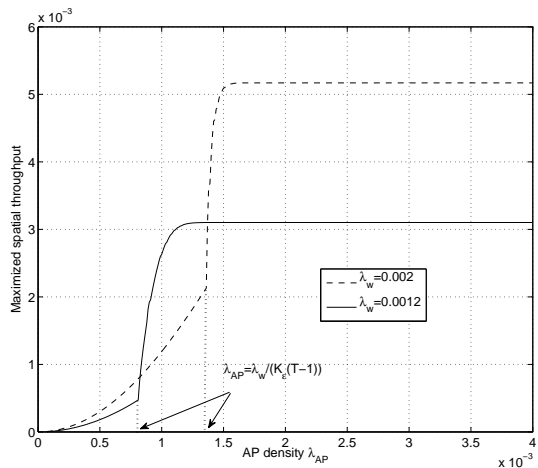}
\caption{Maximized spatial throughput over $\lambda_{AP}$ in battery-free case.}
\label{fig: batter_free_optimal_R}
\vspace{-0.15in}
\end{figure}

\begin{figure}
\setlength{\abovecaptionskip}{-0.1in}
\centering
\DeclareGraphicsExtensions{.eps,.mps,.pdf,.jpg,.png}
\DeclareGraphicsRule{*}{eps}{*}{}
\includegraphics[angle=0, width=0.47\textwidth]{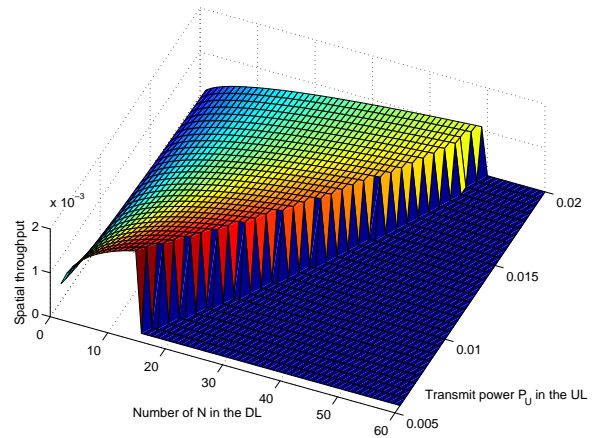}
\caption{Spatial throughput over feasible region in finite-capacity battery
case.}
\label{fig: batter_finite_R}
\vspace{-0.12in}
\end{figure}
Fig.~\ref{fig: batter_finite_R} shows the spatial throughput for the
finite-capacity battery case over  $N$ and
$P_U$, where we set $T=100$, $P_{\max}=0.02$W, $C=0.04$W, $\lambda_w=0.0012/m^2$, and
$\lambda_{AP}=0.0008/m^2$.
By applying Algorithm 2 with $\theta=0.001$ and initialized $\delta=0.0001$, we
use the method presented in Section \ref{section: STM_finite}
to compute   $R^{LB}(N,P_U)$ over all feasible $N$ and $P_U$, and take
the obtained $R^{LB}(N,P_U)$ as a tight approximation to $R(N,P_U)$. We
find  the optimal solutions that maximize  $R^{LB}(N,P_U)$ are
 $N^{*}=14$ and $P_U^{*}=P_{min}=0.0055$W in Fig.~\ref{fig: batter_finite_R}.
 Thus, similar to the battery-free case in Theorem \ref{theorem: P-NB}, the wireless nodes
 prefer to choose   $P_{min}$, which  assures a large transmission probability $\rho$.
 Moreover, with small $N^{*}=14$ in the DL phase, the UL phase is assigned with $T-N^{*}=86$ slots,
 which  helps effectively reduce the UL interference by the independent scheduling.
In addition,  since a  smaller $P_U$ yields an increased   $\rho$, and thus requires
a smaller $N$ to satisfy the transmission probability constraint in Problem (P2),
it is  observed  from Fig.~\ref{fig: batter_finite_R} that the feasible region of $N$ becomes smaller as $P_U$ decreases.

\section{Conclusion}
In this paper, we studied the optimal tradeoff between the energy
transfer and information transfer  in a large-scale WPCN, for both battery-free and battery-deployed wireless nodes.
We proposed a new time-partition-based harvest-then-transmit protocol and
modeled the network based on homogeneous PPPs.
By using tools from stochastic geometry, we characterized the distribution of
the harvested energy in the DL and the successful information transmission
probability in the UL.
We studied the resulting transmission probability and successfully solved the spatial
throughput maximization problem for both battery-free and battery-deployed cases.
Moreover, by comparing the network performance in the
battery-free,  infinite-capacity battery, and  finite-capacity
battery cases, we investigated the effects of battery storage on the system spatial throughput.

%%%%%%%%%%%%%%%%%%%%%%%%%%%%%%%%%%%%%%%%%%%%%%%%%%%%%%%%%%%%%%%%%%%%%%%%%%%%%%%%%%%%%%%%%%

\appendices

\section{Proof of Proposition \ref{proposition: P_suc_constraint_Equivalent}}
\label{appendix: proof_euqivalent_trans_prob}
We first  present three lemmas.
\begin{figure}
\centering
\DeclareGraphicsExtensions{.eps,.mps,.pdf,.jpg,.png}
\DeclareGraphicsRule{*}{eps}{*}{}
\includegraphics[angle=0, width=0.47\textwidth]{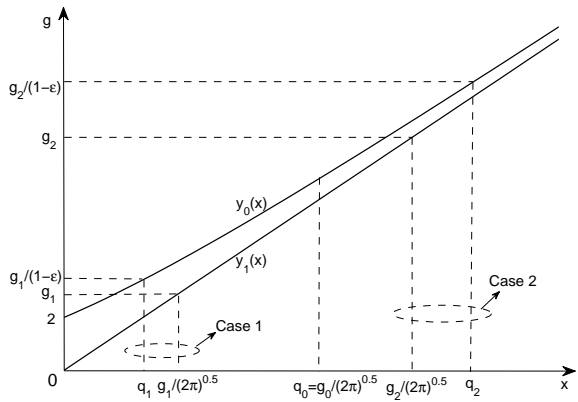}
\caption{Illustration of $y_0(x)=\frac{g}{1-\epsilon}$ and $y_1(x)=g$.}
\label{fig: equivalent_P_suc}
\vspace{-0.15in}
\end{figure}
\begin{lemma}\label{lemma: appendix_equivalent_P_suc_1}
For any $x\geq \frac{g}{\sqrt{2\pi}}$, $g\geq 0$, $g\exp\left(
\frac{x^2}{2}\right)Q(x)\geq 1-\epsilon$ is equivalent to
$x\leq q$ with $\frac{g}{\sqrt{2\pi}}\leq q$, where $q$ is the unique solution
to $g\exp\left( \frac{q^2}{2}\right)Q(q)= 1-\epsilon$.
\end{lemma}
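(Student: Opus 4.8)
The plan is to reduce the superficially two-sided condition to the monotonicity of a single function. Set $\phi(x)=g\exp(x^2/2)Q(x)$; I treat $g>0$, since for $g=0$ we have $\phi\equiv 0<1-\epsilon$ and the statement is vacuous. The whole lemma follows once I show that $\phi$ is strictly decreasing on $[g/\sqrt{2\pi},\infty)$ (in fact on all of $\mathbb{R}$), so that the condition $\phi(x)\geq 1-\epsilon$ holds exactly on a half-line $x\leq q$ cut off at the unique root $q$ of $\phi(q)=1-\epsilon$.

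First I would differentiate. Using $Q'(x)=-\frac{1}{\sqrt{2\pi}}e^{-x^2/2}$,
\[
\phi'(x)=g\,e^{x^2/2}\Big(xQ(x)-\tfrac{1}{\sqrt{2\pi}}e^{-x^2/2}\Big).
\]
Writing $h(x)$ for the bracketed factor, the key step is the identity $h'(x)=Q(x)>0$, which drops out after the two copies of $e^{-x^2/2}$ cancel. Hence $h$ is strictly increasing, and since $h(x)\to 0$ as $x\to\infty$ (both terms vanish), $h(x)<0$ for all finite $x$; therefore $\phi'(x)<0$ and $\phi$ is strictly decreasing. Continuity together with $\phi(x)\to 0$ as $x\to\infty$ and $\phi(x)\to\infty$ as $x\to-\infty$ (both read off from the Mills bounds $\frac{x}{x^2+1}<\frac{Q(x)}{\varphi(x)}<\frac{1}{x}$ for $x>0$, where $\varphi$ is the standard normal density) gives a unique $q$ with $\phi(q)=1-\epsilon$, and $\phi(x)\geq 1-\epsilon\iff x\leq q$. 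Restricted to $x\geq g/\sqrt{2\pi}$ this is exactly the claimed equivalence; the same picture underlies the figure, where the increasing curve $1/(e^{x^2/2}Q(x))$ meets the level $g/(1-\epsilon)$ at $x=q$.

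The remaining assertion $g/\sqrt{2\pi}\leq q$ is a feasibility statement: since $\phi$ is decreasing, it is equivalent to $\phi(g/\sqrt{2\pi})\geq 1-\epsilon$. I would estimate this endpoint with the Mills lower bound, obtaining $\phi(g/\sqrt{2\pi})>\frac{g^2}{g^2+2\pi}$, which exceeds $1-\epsilon$ precisely when $g$ is sufficiently large. This is the step I expect to be the main obstacle, since it does not hold for every $g$ and must be tied to the regime guaranteed later in the proof of Proposition \ref{proposition: P_suc_constraint_Equivalent} by the minimum transmit-power threshold $P_{\min}$ (equivalently, a large enough $G$ together with the defining equation for $g_0$). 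The monotonicity argument is robust, but pinning down this endpoint inequality is where the lemma genuinely interacts with the rest of the proof.
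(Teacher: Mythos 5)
Your proof is correct and is essentially the paper's own argument: the paper establishes the equivalence from the monotonicity of $y_0(x)=\exp(-x^2/2)/Q(x)$, which is just the reciprocal of your $\phi$ (up to the factor $g$), and it likewise treats $\frac{g}{\sqrt{2\pi}}\leq q$ as a feasibility condition needed so that some $x\in\left[\frac{g}{\sqrt{2\pi}},q\right]$ exists rather than as an unconditional claim. The only differences are ones of rigor: you actually verify the monotonicity (via $h'(x)=Q(x)>0$ and $h\to 0$ at infinity) where the paper calls it ``easy to verify,'' and you correctly observe that the endpoint condition $\phi(g/\sqrt{2\pi})\geq 1-\epsilon$ fails for small $g$ --- exactly the issue the paper defers to its next lemma, which shows $\frac{g}{\sqrt{2\pi}}\leq q$ if and only if $g\geq g_0$.
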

\begin{proof}
Let $y_0(x)=\frac{\exp\left( \frac{-x^2}{2}\right)}{Q(x)}$. It is easy to verify
that $y_0(x)$ monotonically increases over $x\geq 0$, as shown in Fig.~\ref{fig:
equivalent_P_suc}. As a result, $g\exp\left( \frac{x^2}{2}\right)Q(x)\geq
1-\epsilon$  is equivalent to $x\leq q$. Moreover, since $x\geq
\frac{g}{\sqrt{2\pi}}$, we  need $\frac{g}{\sqrt{2\pi}}\leq q$; otherwise,
no valid $x\in[\frac{g}{\sqrt{2\pi}},q]$ exists to meet $g\exp\left(
\frac{x^2}{2}\right)Q(x)\geq 1-\epsilon$. Lemma \ref{lemma: appendix_equivalent_P_suc_1} thus
follows.
\end{proof}
\begin{lemma} \label{lemma: appendix_equivalent_P_suc_2}
$\frac{g}{\sqrt{2\pi}}\leq q$ is equivalent to $g\geq g_0$.
\end{lemma}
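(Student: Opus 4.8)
The plan is to exploit the monotonicity already established in Lemma~\ref{lemma: appendix_equivalent_P_suc_1}. Recall that there $q=q(g)$ is defined as the unique root of $g\exp(q^2/2)Q(q)=1-\epsilon$, and that the function $y_0(x)=\exp(-x^2/2)/Q(x)$ is strictly increasing; hence for each fixed $g>0$ the map $\phi(x):=g\exp(x^2/2)Q(x)=g/y_0(x)$ is strictly decreasing in $x$. Since $\phi$ is decreasing, the desired inequality $\frac{g}{\sqrt{2\pi}}\le q$ is equivalent to $\phi\!\left(\frac{g}{\sqrt{2\pi}}\right)\ge\phi(q)=1-\epsilon$, which converts the two-variable comparison into a single scalar inequality in $g$.

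Evaluating $\phi$ at $x=\frac{g}{\sqrt{2\pi}}$ gives $\psi(g):=\phi\!\left(\frac{g}{\sqrt{2\pi}}\right)=g\exp\!\left(\frac{g^2}{4\pi}\right)Q\!\left(\frac{g}{\sqrt{2\pi}}\right)$, so the condition becomes $\psi(g)\ge 1-\epsilon$. By construction $g_0$ is exactly the root of $\psi(g_0)=1-\epsilon$, which is precisely the equation defining $g_0$ in Proposition~\ref{proposition: P_suc_constraint_Equivalent}. Thus it remains only to show that $\psi$ is monotonically increasing in $g$; granting this, $\psi(g)\ge 1-\epsilon$ holds if and only if $g\ge g_0$, and the claimed equivalence follows at once.

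The heart of the argument, and the step I expect to be the main obstacle, is therefore the monotonicity of $\psi$. First I would change variables to $t=\frac{g}{\sqrt{2\pi}}$, writing $\psi=\sqrt{2\pi}\,h(t)$ with $h(t)=t\exp(t^2/2)Q(t)$, so that it suffices to show $h'(t)>0$. Using $Q'(t)=-\frac{1}{\sqrt{2\pi}}e^{-t^2/2}$ one obtains $h'(t)=(1+t^2)\exp(t^2/2)Q(t)-\frac{t}{\sqrt{2\pi}}$. The positivity of this expression is not obvious, because $h$ is a product of the increasing factor $t$ and the decreasing factor $\exp(t^2/2)Q(t)$; I would close the gap with the standard Gaussian tail (Mills-ratio) lower bound $Q(t)>\frac{t}{1+t^2}\cdot\frac{1}{\sqrt{2\pi}}e^{-t^2/2}$ for $t>0$, which gives $(1+t^2)\exp(t^2/2)Q(t)>\frac{t}{\sqrt{2\pi}}$ and hence $h'(t)>0$. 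This shows $\psi$ is strictly increasing, which together with the reductions above completes the proof.
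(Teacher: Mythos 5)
Your proposal is correct, and it is rigorous at precisely the point where the paper's own proof is informal; the two arguments take genuinely different routes. The paper works with the two curves $y_0(x)=\exp(-x^2/2)/Q(x)$ and $y_1(x)=\sqrt{2\pi}\,x$: it uses the asymptotic expansion of $Q$ to show $y_0\geq y_1$ with a gap that vanishes as $x\to\infty$, and then argues pictorially (via its Fig.~7) that when $g<g_0$ the ``big gap'' forces $q<g/\sqrt{2\pi}$, while when $g>g_0$ the ``increasingly small gap'' forces $q>g/\sqrt{2\pi}$. That step is never made quantitative --- what is really needed is that the relative gap between the two curves shrinks \emph{monotonically}, which the expansion alone does not establish. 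Your proof supplies exactly this missing monotonicity in a cleaner form: you collapse the two-curve comparison into the single scalar condition $\psi(g)=g\exp\bigl(g^2/(4\pi)\bigr)Q\bigl(g/\sqrt{2\pi}\bigr)\geq 1-\epsilon$, note that $g_0$ is by definition the root of $\psi(g_0)=1-\epsilon$, and prove $\psi$ is strictly increasing via the Mills-ratio lower bound $Q(t)>\frac{t}{1+t^2}\cdot\frac{1}{\sqrt{2\pi}}e^{-t^2/2}$. Your derivative computation $h'(t)=(1+t^2)e^{t^2/2}Q(t)-t/\sqrt{2\pi}$ is correct, and the bound makes it positive for all $t>0$. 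Beyond rigor, your route buys two extra facts for free: strict monotonicity of $\psi$ (from $0$ to $1$) gives existence and uniqueness of $g_0$, which the paper merely asserts, and it delivers the full ``if and only if'' threshold statement in one step rather than by case inspection. One small reconciliation worth noting: matching $\psi(g_0)=1-\epsilon$ to the paper's displayed definition of $g_0$ requires reading its $Q\left(\frac{g_0}{2\pi}\right)$ as $Q\left(\frac{g_0}{\sqrt{2\pi}}\right)$; this is the reading forced by the paper's own appendix (where $q_0=g_0/\sqrt{2\pi}$ is said to solve $\frac{g_0}{1-\epsilon}=\frac{\exp(-q_0^2/2)}{Q(q_0)}$), so you have resolved a typo in the paper rather than introduced an error.
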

\begin{proof}
Let $y_1(x)\!=\!\sqrt{2 \pi}x$. While $q$ is the unique solution to
$y_0(x)\!=\!\frac{g}{1-\epsilon}$, we have
$\frac{g}{\sqrt{2 \pi}}$ is the unique solution to $y_1(x)\!=\!g$. Notice that $q$
is a function of $g$, and as   shown in Fig.~\ref{fig: equivalent_P_suc}, when
$g$ increases, both $q$ and $\frac{g}{\sqrt{2 \pi}}$ increase.
First, since $g_0$ is the unique
solution to $\frac{g_0}{1-\epsilon}\!=\!\frac{\exp \left(-g_0^2/4 \pi
\right)}{Q\left(g_0/2 \pi \right)}$,
it is easy to obtain that when $g\!=\!g_0$,
$q_0\!=\!\frac{g_0}{\sqrt{2 \pi}}$ is the unique solution to
$\frac{g_0}{1-\epsilon}\!=\!\frac{\exp\left( -q_0^2/2\right)}{Q(q_0)}$.
In other words, when $g\!=\!g_0$, we obtain $q\!=\!\frac{g}{\sqrt{2 \pi}}$. Next, by expanding
$Q(x)$, we have   $Q(x)\!=\!\frac{1}{\sqrt{2 \pi}}\exp\left(
-\frac{x^2}{2}\right)\left[
\frac{1}{x}-\frac{1}{x^2}+o(x^{-4})\right]$. We thus can easily obtain
$y_0(x)\geq y_1(x)$, and $\lim_{x\rightarrow \infty}y_0(x)\!=\!y_1(x)$, i.e.,
$y_0(x)$ and $y_1(x)$ are getting closer as $x$ increases. As a result, as
illustrated in Fig.~\ref{fig: equivalent_P_suc}, it is
easy to verify the followings:
 1) when $g<g_0$, due to the big gap between $y_0(x)$ and $y_1(x)$,   we have
$q<\frac{g}{\sqrt{2 \pi}}$, as illustrated by Case 1 with $g=g_1$ and $q=q_1$;
and 2) due to the increasingly small gap between $y_0(x)$ and $y_1(x)$ as $g$
increases, when $g>g_0$, we have    $q>\frac{g}{\sqrt{2 \pi}}$, as illustrated
by Case 2 with $g=g_2$ and $q=q_2$. Lemma \ref{lemma:
appendix_equivalent_P_suc_2} thus follows.
\end{proof}
\begin{lemma}\label{lemma: appendix_equivalent_P_suc_3}
When $g\geq g_0$, we have
$\frac{\exp\left(-\frac{x^2}{2}\right)}{Q(x)}=\sqrt{2\pi}x$, as
$\epsilon\rightarrow 0$.
\end{lemma}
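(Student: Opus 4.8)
The plan is to show that the hypothesis $g\geq g_0$ drives the relevant argument of $Q$ to infinity as $\epsilon\rightarrow 0$, so that the claimed identity reduces to the standard large-argument asymptotics of the Gaussian tail. First I would establish the key auxiliary fact that $g_0\rightarrow\infty$ as $\epsilon\rightarrow 0$. Recall from the proof of Lemma \ref{lemma: appendix_equivalent_P_suc_2} that at $g=g_0$ the point $q_0=g_0/\sqrt{2\pi}$ satisfies $y_0(q_0)=\frac{g_0}{1-\epsilon}=\frac{y_1(q_0)}{1-\epsilon}$, so that the ratio obeys $\frac{y_0(q_0)}{y_1(q_0)}=\frac{1}{1-\epsilon}$. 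Since $\frac{y_0(x)}{y_1(x)}=\frac{1}{\sqrt{2\pi}\,x\,e^{x^2/2}Q(x)}$ is strictly decreasing from $+\infty$ down to $1$ as $x$ ranges over $(0,\infty)$, and approaches $1$ only in the large-$x$ regime (both facts following from the expansion of $Q(x)$ already recorded in that proof), the requirement $\frac{y_0(q_0)}{y_1(q_0)}=\frac{1}{1-\epsilon}\rightarrow 1^{+}$ can be met only by $q_0\rightarrow\infty$, whence $g_0=\sqrt{2\pi}\,q_0\rightarrow\infty$.

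Next I would pass from $g_0\rightarrow\infty$ to the argument $x$ appearing in the statement. By the setup of Lemma \ref{lemma: appendix_equivalent_P_suc_1}, the relevant $x$ satisfies $x\geq\frac{g}{\sqrt{2\pi}}$, and with $g\geq g_0$ this yields $x\geq\frac{g_0}{\sqrt{2\pi}}=q_0$. Combining this with the first step forces $x\rightarrow\infty$ as $\epsilon\rightarrow 0$. I would then invoke the asymptotic expansion of $Q(x)$ used in the proof of Lemma \ref{lemma: appendix_equivalent_P_suc_2}, which gives $\frac{\exp(-x^2/2)}{Q(x)}=\sqrt{2\pi}\,x\bigl(1+o(1)\bigr)$ as $x\rightarrow\infty$; the relative error $o(1)$ vanishes in this limit, so $\frac{\exp(-x^2/2)}{Q(x)}\rightarrow\sqrt{2\pi}\,x$, which is precisely the asserted identity as $\epsilon\rightarrow 0$.

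The hard part will be the first step, namely arguing cleanly that $g_0\rightarrow\infty$. The delicate point is that the ratio $y_0/y_1$ tends to $1$ strictly from above and does so only as its argument grows, so that the forcing condition $y_0(q_0)/y_1(q_0)=1/(1-\epsilon)$ cannot be satisfied at any bounded $q_0$ once $\epsilon$ is small. The strict monotonicity of $y_0/y_1$ (equivalently, that $\sqrt{2\pi}\,x\,e^{x^2/2}Q(x)$ increases monotonically to $1$) together with the asymptotics established for Lemma \ref{lemma: appendix_equivalent_P_suc_2} is exactly what makes this rigorous; once $g_0\rightarrow\infty$ is secured, the remainder is a direct substitution into the Mills-ratio asymptotics.
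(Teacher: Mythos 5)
Your proof is correct, but it takes a genuinely different route from the paper's. The paper never needs your key auxiliary fact that $g_0\rightarrow\infty$: it simply sandwiches the ratio between two bounds already established. Specifically, for $x=g/\sqrt{2\pi}$ with $g\geq g_0$, Lemma \ref{lemma: appendix_equivalent_P_suc_2} gives $q\geq g/\sqrt{2\pi}=x$, so the monotonicity of $y_0$ from the proof of Lemma \ref{lemma: appendix_equivalent_P_suc_1} yields $y_0(x)\leq y_0(q)=\frac{g}{1-\epsilon}=\frac{\sqrt{2\pi}x}{1-\epsilon}$; combined with the lower bound $y_0(x)\geq\sqrt{2\pi}x$ recorded in the proof of Lemma \ref{lemma: appendix_equivalent_P_suc_2}, this traps $\frac{y_0(x)}{\sqrt{2\pi}x}$ in $\left[1,\frac{1}{1-\epsilon}\right]$, and letting $\epsilon\rightarrow 0$ finishes the proof with an explicit relative error $\frac{\epsilon}{1-\epsilon}$ that is uniform in $x$ and requires no large-$x$ asymptotics at all. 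Your route instead forces $q_0=g_0/\sqrt{2\pi}\rightarrow\infty$ and then invokes the Mills-ratio asymptotics; this is valid, and it buys a structural insight the paper's proof does not provide (tightening the QoS target $\epsilon\rightarrow 0$ forces $g_0$, and hence $P_{\min}=\frac{g_0^2\beta\sigma^2}{\pi^3\lambda_{AP}^2}$, to diverge), but it is longer and its error control is implicit rather than explicit. One caveat: your assertion that the strict monotonicity of $\sqrt{2\pi}\,x\,e^{x^2/2}Q(x)$ on all of $(0,\infty)$ ``follows from the expansion of $Q(x)$'' overstates what the expansion gives, since it is only a large-$x$ statement; global monotonicity requires the lower Mills-type bound $Q(x)>\frac{x}{1+x^2}\cdot\frac{e^{-x^2/2}}{\sqrt{2\pi}}$. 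Alternatively, your forcing argument survives with less: it suffices that the ratio $y_0/y_1$ is continuous, strictly greater than $1$ at every finite $x>0$ (the upper Mills bound, used implicitly by the paper as well), and tends to $1$ only as $x\rightarrow\infty$, so that it is bounded away from $1$ on bounded sets. Either patch is standard, so this is a provenance gap rather than a mathematical one.
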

\begin{proof}
It has been shown from the proof of Lemma \ref{lemma:
appendix_equivalent_P_suc_2} that
$\frac{\exp\left(-\frac{x^2}{2}\right)}{Q(x)}\!>\!\sqrt{2\pi}x$. On the other hand,
when $g\!\geq \!g_0$, since  $q\!\geq \!\frac{g}{\sqrt{2 \pi}}$, we have
$\frac{\exp\left(-\frac{x^2}{2}\right)}{Q(x)}\Big|_{x=\frac{g}{\sqrt{2
\pi}}}\!\leq\!
\frac{\exp\left(-\frac{x^2}{2}\right)}{Q(x)}\Big|_{x=q}\!=\!\frac{g}{1-\epsilon}$,
or equivalently, $\frac{\exp\left(-\frac{x^2}{2}\right)}{Q(x)} \leq
\frac{\sqrt{2\pi}x}{1-\epsilon}$, by substituting $g\!=\!\sqrt{2\pi}x$.
As a result, $\frac{\exp\left(-\frac{x^2}{2}\right)}{Q(x)}\!=\!\sqrt{2\pi}x$, as
$\epsilon\rightarrow 0$. Lemma \ref{lemma: appendix_equivalent_P_suc_3} thus
follows.
\end{proof}

Therefore, since $\Upsilon\geq \frac{G}{\sqrt{2\pi}}$ in (\ref{eq: P_suc_4}),
from Lemma \ref{lemma: appendix_equivalent_P_suc_1} and Lemma  \ref{lemma:
appendix_equivalent_P_suc_2}, (\ref{eq: P_suc_4}) is equivalent to $\Upsilon \leq q$ with $G\geq g_0$.
From Lemma  \ref{lemma: appendix_equivalent_P_suc_3},
 we can obtain $q=\frac{G}{(1-\epsilon)\sqrt{2\pi}}$ with
$\epsilon\rightarrow 0$. By
substituting the expression of $\Upsilon$ and $G$, we find $P_{suc}\geq
1-\epsilon$ is equivalent to the transmission probability constraint with
$P_U\geq \frac{g_0^2\beta \sigma^2}{\pi^3 \lambda_{AP}^2}$. Proposition
\ref{proposition: P_suc_constraint_Equivalent} thus follows.

\section{Proof of Theorem \ref{theorem: P-NB}} \label{appendix: proof_theorem}
Note that in the first constraint of (P3), for the left-hand side, we have $
\lambda_w\textrm{erf}\left(\frac{\Gamma(N+2/\alpha) \lambda_{AP}}{2\Gamma(N)}\sqrt{\frac{\pi^3 P_D\eta}{P_U}} \right))\leq \lambda_w$,
 and for the right-hand side, we have
$K_{\epsilon}\lambda_{AP} \leq K_{\epsilon}\lambda_{AP} (T-N)\leq
K_{\epsilon}\lambda_{AP} (T-1)$. Thus, by comparing the upper bound of $
\lambda_w\textrm{erf}\left(\frac{\Gamma(N+2/\alpha) \lambda_{AP}}{2\Gamma(N)}\sqrt{\frac{\pi^3 P_D\eta}{P_U}} \right)$ with the upper and lower bounds of
$K_{\epsilon}\lambda_{AP} (T-N)$, respectively, we obtain the following three
regimes of the AP density $\lambda_{AP}$:
\begin{enumerate}
\item If $\lambda_w\leq K_{\epsilon}\lambda_{AP}$,
i.e., in the high AP density regime, it is clear the first
constraint in Problem (P3) can
always hold. Thus, any $N\in\{1,...,T-1\}$
and $P_U\in[P_{\min},P_{\max}]$ are feasible to Problem (P3).  Note that
$R(N,P_U)$ achieves its maximum value when $N=T-1$ and $P_U=P_{\min}$. As a
result, if $N=T-1$ and $P_U=P_{\min}$ satisfy (\ref{eq: rho_1_v_e}) for assuring
$\rho=1$, we find any pair of $N^{*}$
and $P_U^{*}$ that satisfy (\ref{eq: rho_1_v_e})
 are optimal to Problem (P3), and $R(N^{*},P_U^{*})=\lambda_w\log(1+\beta)$;
otherwise, we have $\rho<1$ and thus $R(N^{*},P_U^{*})<\lambda_w\log(1+\beta)$,
with $N^{*}=T-1$ and $P_U^{*}=P_{\min}$.
 \item If $K_{\epsilon}\lambda_{AP}<\lambda_w\leq
K_{\epsilon}\lambda_{AP}(T-1)$, i.e., in the medium AP density regime,  a
unique $N_0$ clearly exists, since otherwise, the condition
$K_{\epsilon}\lambda_{AP}<\lambda_w\leq K_{\epsilon}\lambda_{AP}(T-1)$ cannot
hold. It is thus easy to verify that the first constraint in Problem (P3)
holds if and only if $N\leq
N_0$. As a result, the feasible region for Problem (P3) is given by any
$N\in\{1,...,N_0\}$ and $P_U\in[P_{\min},P_{\max}]$. At last, by using the
similar method  as in the case of  high AP density regime, we can easily find
$N^{*}$, $P_U^{*}$ and $R(N^{*},P_U^{*})$ as stated in Theorem \ref{theorem:
P-NB}.
\item If $\lambda_w> K_{\epsilon}\lambda_{AP}(T-1)$, i.e., in the low AP density
regime, if $ \lambda_w\textrm{erf}\left(\frac{\Gamma(N+2/\alpha) \lambda_{AP}}{2\Gamma(N)}\sqrt{\frac{\pi^3 P_D\eta}{P_{max}}} \right) > K_{\epsilon}\lambda_{AP} (T-N)$
at $N=1$, which gives the largest value of the right-hand side in the first constraint of (P3),  the first constraint of Problem (P3) cannot hold,
and thus there is no feasible solution; otherwise, there exists
optimal $N^{*}$ and $P_U^{*}$, which yield $\rho<1$.
As shown in Algorithm 1, since for any given $N$,
 $\lambda_w\textrm{erf}\left(\frac{\Gamma(N+2/\alpha) \lambda_{AP}}{2\Gamma(N)}\sqrt{\frac{\pi^3 P_D\eta}{P_U}} \right)$ achieves its minimum value when
$P_U=P_{\max}$, we use $\lambda_w \textrm{erf}\left(\frac{\pi^2 \lambda_{AP}
N}{4}
\sqrt{\frac{P_D\eta}{P_{\max}}} \right) \leq K_{\epsilon}\lambda_{AP} (T-N)$ to
check whether an $N$ is feasible, by searching over $N\in\{1,...,T-1\}$. After
finding a
feasible $N$, we then calculate the corresponding $P_U=\max(P_{s},P_{\min})$
that
maximizes $R(N, P_U)$. Finally, by comparing all the feasible $N$'s and their
corresponding $P_U$'s, we can find optimal $N^{*}$ and $P_U^{*}$ that maximizes
$R(N,P_U)$. Clearly, by  searching over
$N\in\{1,...,T-1\}$, Algorithm 1 is of complexity or of $\mathcal{O}(T)$.
\end{enumerate}
Based on the above three cases, Theorem \ref{theorem: P-NB} thus follows.

\section{Proof of Proposition \ref{proposition: rho_storage_Inf}}
\label{appendix: proof_rho_inf_B}
We note  a  different proof based on random walk  theory for Proposition
\ref{proposition: rho_storage_Inf}  was provided in \cite{Huang.IT.13}.
Compared to \cite{Huang.IT.13}, by exploiting the distribution of
$Z_F$, the proof presented in the following is much simpler.
From (\ref{eq: S_F_Storage}), we have
\begin{align}
S_F=\sum_{i=1}^{F}Z_i-P_U\sum_{i=1}^{F}I(S_{i-1}\geq P_U)
\geq \sum_{i=1}^{F}Z_i-FP_U. \label{eq: appx_S_F_geq}
\end{align}
Under Assumption 1 with i.i.d. $Z_F$'s, it is easy to verify that the point processes at the end of the DL phase
of each frame are i.i.d. PPPs, each with density $\lambda_{AP}$. Thus, $\sum_{i=1}^{F}Z_i$ gives  the harvested energy over all $F$ i.i.d PPPs,
which is equivalent to the harvested energy in a PPP of density $F\lambda_{AP}$.
Hence,  we can easily obtain that for any given $z\geq0$,
$\mathbb{P}\left(\sum_{i=1}^{F}Z_i \geq z \right)=\textrm{erf}\left(\frac{\Gamma(N+2/\alpha) F \lambda_{AP}}{2\Gamma(N)}\sqrt{\frac{\pi^3 P_D\eta}{P_U}} \right)$,
which is equal to $1$ when $F$ is sufficiently large.
As a result, from (\ref{eq: trans_prob_def}) and (\ref{eq: appx_S_F_geq}), we
obtain
\begin{align}
\rho&\geq \lim_{n \to \infty} \frac{1}{n}
\sum_{F=1}^{n}\mathbb{P}\left(\sum_{i=1}^{F}Z_i \geq (F+1) P_U \right) \nonumber
\\
&=\lim_{n \to \infty} \frac{1}{n} \sum_{F=1}^{n}\textrm{erf}\left(\frac{\Gamma(N+2/\alpha) F \lambda_{AP}}{2\Gamma(N)}\sqrt{\frac{\pi^3 P_D\eta}{P_U}} \right) \nonumber \\
&=1.
\end{align}
Since $\rho\leq 1$, we have $\rho=1$. Proposition \ref{proposition:
rho_storage_Inf} thus follows.

\section{Proof of Proposition \ref{proposition: rho_bound_closedForm}}
\label{appendix: proof_rho_finite_B}
Since both $\textrm{erf}\left(\frac{\pi^2 \lambda_{AP} N}{4}
\sqrt{\frac{P_D\eta}{P_U}}\right)$ and $1-e^{-Q(C-P_U)}$ are lower bounds of
$\rho\leq 1$, it is easy to find that $\mathcal{L}\leq \rho \leq 1$
 holds. This proof mainly derives
the expression of $Q$ by solving
$\ln \mathbb{E}\left[e^{-Q(Z_F-P_U)} \right]=0$, or equivalently,
$\mathbb{E}\left(e^{-QZ_F} \right)=e^{-QP_U}$.
From the Laplace transform of $Z_F$ given in Proposition \ref{proposition: Z_F_distribution},
it is easy to find that $\mathbb{E}\left(e^{-QZ_F} \right)=\exp \left(-\pi \lambda_{AP} \frac{\Gamma(N+2/\alpha)}{\Gamma(N)} \Gamma(1-2/\alpha)(P_D \eta Q)^{2/\alpha} \right)$,
and thus $Q$ by letting $\mathbb{E}\left(e^{-QZ_F} \right)=e^{-QP_U}$.
 Proposition
\ref{proposition: rho_bound_closedForm} thus follows.


\begin{thebibliography}{11}


\bibitem{microwave} R. J. M. Vullers, R. V. Schaijk, I. Doms, C. V. Hoof, and R. Mertens,
``Micropower energy harvesting,'' {\it Elsevier Solid-State Circuits}, vol. 53,
no. 7, pp. 684-693, Jul. 2009.

\bibitem{Visser.IEEEProc.2013} H. J. Visser and R. J. Vullers,
``RF energy harvesting and transport for wireless sensor network applications:
Principles and requirements,'' {\it Proc. IEEE}, vol. 101, no. 6, pp. 1410–1423,
Apr. 2013.




\bibitem{EH_overview} A. M. Zungeru, L. M. Ang, S. Prabaharan, and K. P. Seng,
``Radio frequency energy harvesting and management for wireless sensor
networks,'' {\it Green Mobile Devices Netw.: Energy Opt. Scav. Tech.}, CRC
Press, pp. 341-368, 2012.

\bibitem{He.TMC.13} S.~He, J.~Chen, F.~Jiang, D.~Yau, G.~Xing, and Y.~Sun, ``Energy provisioning in wireless
rechargeable sensor networks,'' {\it IEEE Trans. Mob. Comput.}, vol.~12, no.~10, pp.~1931-1942,  Oct. 2013.

\bibitem{Ju.TWC} H.~Ju and R.~Zhang, ``Throughput maximization in
wireless powered communication networks,'' {\it IEEE Trans. Wireless Commun.},
vol.~13, no. 1, pp. 418-428, Jan. 2014.

%\bibitem{Xun} X. Zhou, R. Zhang, and C. K. Ho, ``Wireless information and power
%transfer: Architecture design and rate-energy tradeoff,'' {\it IEEE
%Trans. Commun.}, vol. 61, no. 11, pp. 4757-4767, Nov.  2013.

%\bibitem{Iannello.TCOM.2012} F.~Iannello, O.~Simeone, and U.~Spagnolini,
%``Medium access control protocols for
%wireless sensor networks with energy harvesting,'' {\it IEEE
%Trans. Commun.}, vol. 60, no. 5, pp. 1381-1389, May  2012.

\bibitem{Ozel.JSAC.2011}O.~Ozel, K.~Tutuncuoglu, J.~Yang, S.~ Ulukus,  and
A.~Yener,
``Transmission with energy harvesting nodes in fading wireless channels: Optimal
policies,'' \emph{IEEE J. Sel. Areas Commun.}, vol. 29, no. 8, pp.~1732-1743,
Sep. 2011.

\bibitem{Ho.TSP.2012} C.~K.~Ho and R.~Zhang, ``Optimal energy allocation for
wireless communications with energy harvesting constraints,'' \emph{IEEE Trans.
Sig. Process.}, vol.~60, no.~9, pp.~4808-4818, Sep.~2012.

\bibitem{Xu.JSAC.2014} J.~Xu and R.~Zhang, ``Throughput optimal policies for
energy harvesting wireless transmitters with non-ideal circuit power,''
\emph{IEEE J. Sel. Areas Commun.}, vol.~32, no.~2, pp. 322-332, Feb.~2014.


%\bibitem{Orhan.ISIT.2013}O.~Orhan, D.~G\"und\"uz, and E.~Erkip, ``Throughput maximization for an energy harvesting
%communication system with processing cost,''  in {\it Proc.  IEEE Int. Symp. Inf. Theory}, Jul. 2013.















\bibitem{Rui.TWC.13} R. Zhang and C. K. Ho, ``MIMO broadcasting for simultaneous wireless
information and power transfer,'' {\it IEEE Trans. Wireless Commun.}, vol.
12, no. 5, pp. 1989-2001, May 2013.

\bibitem{Liu.TWC.13} L. Liu, R. Zhang, and K. C. Chua, ``Wireless information transfer with
opportunistic energy harvesting,'' {\it IEEE Trans. Wireless Commun.}, vol.
12, no. 1, pp. 288-300, Jan. 2013.

\bibitem{Wen.datacenter1} Y.~C.~Jin, Y.~G.~Wen, and Q.~H.~Chen, ``Energy efficiency and server virtualization in
data centers: An empirical investigation,'' {\it IEEE INFOCOM Workshop on Communications and Control for Sustainable Energy Systems:
Green Networking and Smart Grids}, Mar. 2012.

\bibitem{Wen.datacenter2}Y.~C.~Jin, Y.~G.~Wen, Z.~Q.~Zhu, and Q.~H.~Chen, `` An empirical investigation of the impact
of server virtualization on energy efficiency for green data center,'' {\it The Computer Journal}, Oxford Journals,
vol.~51, no.~8, pp.~968-975, Aug.~2013.

\bibitem{Wen.EE} W.~W.~Zhang, Y.~G.~Wen, K.~Guan, D.~Kilper, H.~Y.~Luo, and D.~P.~Wu, ``Energy-efficient
mobile computing under stochastic wireless channel,'' {\it IEEE Trans. Wireless Commun.}, vol.
12, no. 9, pp. 4569-4581, Sep. 2013.





\bibitem{Stoyan.SG.95} D.~Stoyan, W.~S.~Kendall, and J.~Mecke, {\it Stochastic
geometry and its applications}, 2nd edition. John Wiley and Sons, 1995.


%\bibitem{Andrews.COM.09} S.~Weber, J.~G.~Andrews, and N.~Jindal, ``An overview
%of the transmission capacity of wireless networks,'' {\it IEEE Trans. Commun.},
%vol. 58, no. 12, pp. 3593-3604, Dec. 2010.

\bibitem{Haenggi.book} M.~Haenggi and R.~K.~Ganti, {\it Interference in large
 wireless networks.} NOW: Foundations and Trends in Networking, ~2008.

\bibitem{Huang.TWC.14} K.~Huang and V.~K.~N.~Lau, ``Enabling wireless power
transfer in cellular networks: architecture, modeling and deployment,''  {\it
IEEE Trans. Wireless Commun.}, vol. 13, no. 2, pp. 902-912, Feb. 2014.




\bibitem{Lee.TWC.2013} S.~Lee, R.~Zhang, and K.~Huang, ``Opportunistic wireless
energy harvesting in cognitive radio networks,'' {\it IEEE Trans. Wireless
Commun.}, vol. 12, no. 9, pp. 4788-4799, Sep. 2013.


\bibitem{Type_I}  Available [online] at http://witricity.com/applications/military.

\bibitem{Type_II} L. Xie, Y. Shi, Y. T. Hou, and H. D. Sherali, ``Making sensor networks immortal: an energy-renewable
approach with wireless power transfer,''  {\it IEEE/ACM Trans. Netw.},  vol. 20, no. 6, pp. 1748-1761, Dec. 2012.


\bibitem{Che.TWC.14} Y.~L.~Che, R.~Zhang, Y.~Gong and L.~Duan,
``On spatial capacity of wireless ad hoc networks with  threshold based
scheduling,'' {\it IEEE Trans. Wireless
Commun.},  vol. 13, no. 12, pp. 6915-6927, Oct. 2014.


\bibitem {Huang.IT.13} K.~Huang, ``Spatial throughput of mobile ad hoc networks
with energy harvesting,'' {\it IEEE Trans. Inf. Theory}, vol. 59, no. 11, pp.
7597-7612, Nov. 2013.

\bibitem{Dhillon.TWC.14} H.~S.~Dhillon, Y.~Li, P.~Nuggehalli, Z.~Pi, and J.~G.~Andrews, ``Fundamentals
of heterogeneous cellular networks with energy harvesting", {\it IEEE Trans. Wireless
Commun.}, vol.~13, no.~5, May 2014.


\bibitem{Sample.RFID.07} A.~P. Sample, D.~J.~Yeager, P.~S.~Powledge, and J.~R.~Smith, ``Design of a
passively-powered, programmable platform for UHF RFID systems,'' in
{\it Proc. IEEE Int. Conf. RFID,} Mar. 2007.





\bibitem{EH_system} A. Sample and J. R. Smith, ``Experimental results with two wireless
power transfer systems,'' in {\it IEEE Radio Wireless Symp.,} Jan.~2009.

\bibitem{Powercast} Product datasheet P2110-915 MHz RF
powerharvester$^\textrm{TM}$
receiver.
Available [online] at http://www.powercastco.com/PDF/P2110-datasheet.pdf.



\bibitem{Wen.scheduling}W.~W.~Zhang, Y.~G.~Wen, and D.~P.~Wu, ``Energy-efficient scheduling policy for
collaborative  execution in mobile cloud computing,'' in
\emph{Proc. IEEE Int. Conf. Computer Commun. (INFOCOM)}, Apr.~2013.

\bibitem{Andrews.COM.11} J.~G.~Andrews, F.~Baccelli, and R.~K.~Ganti, ``A
tractable approach to coverage and rate in cellular networks,'' {\it IEEE Trans.
Commun.},  vol. 59, no. 11, pp. 3122-3134, Nov. 2011.


\bibitem{Baccelli.NOW.I} F.~Baccelli and B.~Blaszczyszyn, {\it Stochastic Geometry and Wireless Networks, Volume I: Theory}.   NOW: Foundations and Trends in Networking, 2009.



\bibitem{Gong.TMC.14}Z.~Gong and M.~Haenggi, ``Interference and outage in mobile random networks: expectation, distribution, and correlation,'' {\it IEEE Trans. Mob. Comput.}, vol. 13, pp. 337-349, Feb. 2014.



 \bibitem{Weber.Alg} S.~P.~Weber and M.~Kam, ``Computational complexity of
outage probability simulations in mobile ad-hoc networks,'' in {\it Proc., Conf.
on Information Sciences and Systems}, Mar. 2005.


%\bibitem{Shadowing} H.~S.~Dhillon  and J.~G.~Andrews, ``Downlink rate distribution in heterogeneous cellular networks under generalized cell selection,'' {\it IEEE
%Wireless Commun. Letters}, vol.~ 3, no.~1, Feb.~2014.






\end{thebibliography}
\end{document}